\documentclass[11pt,letterpaper]{article}

\usepackage[utf8]{inputenc}
\usepackage[english]{babel}

\usepackage{subcaption}

\usepackage[margin=1in]{geometry}
\usepackage{amsfonts,amsmath,amsthm,thmtools}
\usepackage{nicefrac}

\usepackage[dvipsnames,usenames]{xcolor}
\usepackage[colorlinks=true,pdfpagemode=UseNone,urlcolor=RoyalBlue,linkcolor=RoyalBlue,citecolor=OliveGreen,pdfstartview=FitH]{hyperref}
\usepackage[capitalize,noabbrev]{cleveref}

\declaretheorem{theorem}
\declaretheorem[sibling=theorem]{lemma}
\declaretheorem[sibling=theorem]{corollary}
\declaretheorem[sibling=theorem]{fact}
\declaretheorem[sibling=theorem]{claim}

\usepackage{todonotes}

\usepackage{booktabs}
\usepackage{caption}
\captionsetup[table]{skip=8pt}

\usepackage[numbers, sort]{natbib}

\usepackage{tcolorbox}

\usepackage{enumitem}

\usepackage{algorithm}
\usepackage{algpseudocode}

\newcommand{\E}{\mathbf{E}}
\renewcommand{\Pr}{\mathbf{Pr}}

\newcommand{\opt}{\textsc{Opt}}
\newcommand{\alg}{\textsc{Alg}}

\newcommand{\Sref}[1]{\hyperref[#1]{\S\ref*{#1}}}

\newcommand{\p}[1]{\left( #1 \right)}
\newcommand{\cur}[1]{\left\{ #1 \right\}}

\newcommand{\fb}{\bar{f}}
\newcommand{\gb}{\bar{g}}

\newcommand{\ddt}{\frac{\mathrm{d}}{\mathrm{d}t}}
\newcommand{\eps}{\varepsilon}

\title{Edge-weighted Online Stochastic Matching Under Jaillet-Lu LP}

\author{
    Shuyi Yan
    \thanks{BARC, University of Copenhagen. Supported by VILLUM Foundation Grant 54451. Email: \texttt{shya@di.ku.dk}.}
}
\date{}

\begin{document}
    
\begin{titlepage}
    \thispagestyle{empty}
    \maketitle
    \begin{abstract}
        \thispagestyle{empty}
        The online stochastic matching problem was introduced by \cite{feldman2009online}, together with the $(1-\frac1e)$-competitive Suggested Matching algorithm. In the most general edge-weighted setting, this ratio has not been improved for more than one decade, until recently \cite{yan2024edge} beat the $1-\frac1e$ bound and \cite{qiu2023improved} further improved it to $0.650$. Both works measure the online competitiveness against the offline LP relaxation introduced by Jaillet and Lu \cite{jaillet2014online}. The same LP has also played an important role in other settings as it is a natural choice for two-choice online algorithms.

In this paper, we prove an upper bound of $0.663$ and a lower bound of $0.662$ for edge-weighted online stochastic matching under Jaillet-Lu LP.
We propose a simple hard instance and identify the optimal online algorithm for this specific instance which has a competitive ratio of $<0.663$.
Despite the simplicity of the instance, we then show that a near-optimal algorithm for it, which has a competitive ratio of $>0.662$, can be generalized to work on all instances without any loss.

As our algorithm is generalized from a real near-optimal algorithm instead of manually combining trivial strategies, it has two natural advantages compared with previous works: (1) its matching strategy varies from time to time; (2) it utilizes global information about offline vertices.
On the other hand, the upper bound suggests that more powerful LPs and multiple-choice strategies are needed if we want to further improve the ratio by $>0.001$.

In addition to our main result, we also generalize the asymptotic equivalence between the Poisson arrival model and the original online stochastic matching established by \cite{huang2021online}, removing the requirement of approximate monotonicity for the online algorithm.
    \end{abstract}
\end{titlepage}

\section{Introduction}

The online bipartite matching problem was introduced by \cite*{karp1990optimal}, motivated by applications such as online advertising. In this problem, we are given a bipartite graph. Advertisers are modeled as \emph{offline vertices} on one side of the graph, which are known at the beginning. Impressions (user queries) are modeled as \emph{online vertices} on the other side, which arrive one by one. The weights on the edges indicate how much the advertisers are interested in the impressions. When an online vertex arrives, the online algorithm sees its incident edges, and must immediately and irrevocably decide how to match it. \cite{karp1990optimal} proposed the celebrated Ranking algorithm, which achieves a competitive ratio of $1-\frac1e \approx 0.632$ in the unweighted setting. This is optimal in the \emph{worst case model}, where we assume the online algorithm has no prior information about the graph.

However, in real-world scenarios, the algorithm often knows some prior information. For example, online advertising platforms can use historical data to predict user behaviors in the future. \cite*{feldman2009online} introduced the \emph{online stochastic matching} model, where we assume all online vertices are drawn independently from an identical distribution which is known by the algorithm in the beginning. \cite{feldman2009online} proposed the Suggested Matching algorithm which is $(1-\frac1e)$-competitive even in the most difficult edge-weighted setting, and then beat the $1-\frac1e$ bound in the unweighted setting under the integral assumption\footnote{It means that the expected number of online vertices of each type is an integer, or equivalently, is $1$. It makes the problem easier since the lower bound of $1$ implies that online vertices of one type cannot be almost fully matched by only one offline vertex.}. Further works removed the integral assumption \cite{manshadi2012online, jaillet2014online, huang2021online, huang2022power, tang2022fractional} and beat the $1-\frac1e$ bound in the vertex-weighted setting \cite{huang2021online, huang2022power, tang2022fractional} and the free-disposal\footnote{It allows us to match an offline vertex many times and only count the most valuable one in the end.} edge-weighted setting \cite{huang2022power}.

A common framework of the online algorithms is to solve an LP to get a fractional matching which upper-bounds the offline optimal solution, and then use the fractional solution to guide the online strategy. The online competitiveness is then measured against the LP relaxation. In particular, the LP proposed by Jaillet and Lu \cite*{jaillet2014online} is quite useful for two-choices algorithms (which most previous algorithms are) in the sense that it limits the case that an online vertex type is mostly matched to one offline vertex in the fractional solution, forcing most online vertices to have at least two choices. \emph{Under Jaillet-Lu LP}, \cite*{huang2022power} provided both a lower bound and an upper bound of the same ratio $0.706$\footnote{The exact value is $1 - \frac{1}{1-\ln 2} \big( \frac{1}{2e} - \frac{\ln 2}{e^2} \big)$.} for unweighted, vertex-weighted and free-disposal edge-weighted settings.

The edge-weighted setting was known to be \emph{strictly} harder than the above settings due to an upper bound of $0.703$ \cite{huang2022power}. Recently, \cite*{yan2024edge} beat the $1-\frac1e$ bound with a competitive ratio of $0.645$ by extending the Suggested Matching algorithm to have different strategies in $3$ different time periods. \cite*{qiu2023improved} improved the ratio to $0.650$ by using $5$ periods. Both algorithms still use the trivial strategy from Suggested Matching in more than half of the time. In contrast, instead of using several fixed strategies in different periods, the state-of-the-art Poisson Online Correlated Selection algorithm \cite{huang2022power} for unweighted/vertex-weighted settings has a strategy that changes smoothly over time. A natural question arises: Can smoothly changed strategies help to further improve the competitive ratio in the edge-weighted setting?

On the other hand, both algorithms \cite{yan2024edge,qiu2023improved} use the Jaillet-Lu LP relaxation. We know that the provably best ratio for online algorithms under Jaillet-Lu LP is $0.706$ in the other $3$ settings. What about the edge-weighted setting?

\subsection{Our Results}

In this paper, we prove that for edge-weighted online stochastic matching, the optimal competitive ratio against Jaillet-Lu LP is between $0.662$ and $0.663$.

We first propose a hard instance and show that, for this instance, the optimal online algorithm can only achieve a competitive ratio $<0.663$. Then, we extract some key properties of a near-optimal matching process of this specific instance which are sufficient to guarantee a competitive ratio $>0.662$. Finally, we propose a generalized algorithm which maintains these properties for all instances, which is therefore $0.662$-competitive for all instances. The algorithm improves the best known ratio of $0.650$ \cite{qiu2023improved}, and the hardness result suggests that more powerful LPs are necessary for further improvements.\footnote{In the unweighted and vertex-weighted settings, \cite{huang2022power} goes beyond the optimal algorithm against Jaillet-Lu LP by utilizing more powerful LPs. In the free-disposal edge-weighted setting, there is still no algorithm achieving this.}

Compared with existing algorithms \cite{yan2024edge,qiu2023improved} for edge-weighted online stochastic matching, our algorithm has two major differences: (1) instead of only using one or several fixed strategies, we deploy a strategy that smoothly varies from time to time; (2) when an online vertex arrives, instead of only looking at its neighbors locally, our algorithm utilizes some global information about the probability distribution of remaining offline vertices.
Intuitively speaking, both are natural movements towards the optimal algorithm.

Below, we briefly summarize our high-level ideas. For readers not familiar with this problem, detailed definitions and explanations can be found in \cref{sec:preliminaries}.

\paragraph{Hardness.}

Previous work \cite{yan2024edge} showed that, under Jaillet-Lu LP, the problem can be reduced to the case that there are only two classes of online vertex types:
\begin{itemize}
    \item A first-class type has only one offline neighbor;
    \item A second-class type has only two offline neighbors, and it is equally matched to them.
\end{itemize}
In the hard instance, we want to have as many (arrival rates of) first-class types as possible, since they don't have a second choice when the only neighbor has been matched. For second-class types, we want to maximize the probability that, given that one neighbor has been matched, the other neighbor has also been matched. So, if two second-class types share one common neighbor, we tend to let them share another. The above ideas give us the following instance:
\begin{itemize}
    \item For each offline vertex, there is a first-class online vertex type matched to it, with an arrival rate as high as possible;
    \item Offline vertices are divided into pairs. For each pair, there is a second-class online vertex type matched to them.
\end{itemize}
In \cref{sec:hardness}, we identify the optimal online algorithm for this instance.
By carefully setting the edge weights, we show that the optimal algorithm achieves a competitive ratio $<0.663$.

\paragraph{Algorithm.}

After getting the optimal algorithm for the specific hard instance, we try to design our algorithm by generalizing the special algorithm to work on all instances.
For each first-class (resp. second-class) edge, we want to match it at the same rate as the first-class (resp. second-class) edges in the hard instance.
First-class vertices are easy to deal with since they have only one possible choice when arriving, so we focus on second-class vertices.
When a second-class vertex arrives, to make sure that it is feasible to match it at the desired rate, we (at least) need an upper bound of the probability that the two offline neighbors have both been matched.

To be able to simulate the algorithm on all instances, this probability must take its maximum value in our hard instance, so that the upper bound works for all instances. In other words, roughly speaking, the algorithm should match two offline vertices in the hard instance in the most positively correlated way.
Unfortunately, the optimal algorithm for the hard instance does not have this property. Therefore, in \cref{sec:algorithm}, we introduce a near-optimal algorithm with this property which has a competitive ratio $>0.662$ on the hard instance.
Then we show that the near-optimal algorithm can be generalized to have the same competitive ratio on all instances.

\subsection{Other Related Work}

In the worst case model, the Ranking algorithm \cite{karp1990optimal} and its generalization \cite{aggarwal2011online} achieved the optimal competitive ratio of $1-1/e$ for unweighted and vertex-weighted matching, respectively. For free-disposal edge-weighted matching, a sequence of works \cite{fahrbach2020edge,gao2022improved,blanc2022multiway} improved the competitive ratio to $0.5239$.

In the random order model\footnote{The adversary decides the graph, then online vertices arrive in a uniformly random order.}, the currently best ratios are $0.696$ \cite{goel2008online,mahdian2011online} for unweighted matching and $0.686$ \cite{huang2019online,jin2021improved,peng2025revisiting} for vertex-weighted matching.
For edge-weighted matching, \cite{kesselheim2013optimal} achieved the optimal competitive ratio of $1/e$.

In the online stochastic matching model, for unweighted and vertex-weighted matching, a sequence of works \cite{feldman2009online,bahmani2010improved,manshadi2012online,jaillet2014online,huang2021online,huang2022power,tang2022fractional} improved the competitive ratio to $0.716$ and the upper bound to $0.823$. For free-disposal edge-weighted matching, \cite{huang2022power} proposed a $0.706$-competitive algorithm.
Under the integral assumption, the state-of-the-art algorithms achieved competitive ratios of $0.7299$ for unweighted matching \cite{manshadi2012online,jaillet2014online,brubach2020online}, $0.725$ for vertex-weighted matching \cite{jaillet2014online}, and $0.705$ for edge-weighted matching \cite{haeupler2011online,brubach2020online}.
An upper bound of $0.703$ \cite{huang2022power} for (general) edge-weighted matching inferred that it is strictly harder than the above settings.
For non i.i.d. arrivals, recent works \cite{tang2022fractional,chen2024stochastic} achieved a competitive ratio of $0.69$ for unweighted and vertex-weighted matching.

\section{Preliminaries}
\label{sec:preliminaries}

\paragraph{Edge-weighted Online Stochastic Matching.}

An instance of edge-weighted online stochastic matching consists of:
\begin{itemize}
    \item A bipartite graph $G=(I,J,E)$, where $I$ is the set of online vertex types and $J$ is the set of offline vertices;
    \item An arrival rate $\lambda_i$ for each online vertex type $i\in I$;
    \item A nonnegative weight $w_{ij}$ for each edge $(i,j)\in E$.
\end{itemize}
$\Lambda=\sum_{i\in I}\lambda_i$ is the total number of online vertices. $\Lambda$ online vertices will arrive one by one. When an online vertex arrives, it independently draws a random type $i\in I$ with probability $\lambda_i/\Lambda$. Then, the online algorithm can either match it to an unmatched offline neighbor, or simply discard it. The decision has to be made immediately and irrevocably. The objective is to maximize the expected total weight of all matched edges in the end.
The \emph{competitive ratio} of the online algorithm is the infimum of the ratio of its expected objective to the expected objective of the offline optimal matching. We also say that an algorithm is $\alpha$-competitive if its competitive ratio is at least $\alpha$.

\subparagraph{Poisson Arrival Model.}

In the Poisson arrival model, instead of assuming a deterministic number of online vertices, we assume that online vertices arrive following a Poisson process with arrival rate $\Lambda$ in the time horizon $[0,1]$. An equivalent view is that each online vertex type $i\in I$ independently follows a Poisson process with arrival rate $\lambda_i$.
This equivalence and the continuous arrival process greatly help the analyses of online stochastic matching algorithms, since \cite{huang2021online} established an asymptotic equivalence between the Poisson arrival model and the original model for large $\Lambda$.

However, the asymptotic equivalence in \cite{huang2021online} actually does not hold for arbitrary algorithms, as it requires some additional property of the online algorithm. In particular, it works for all existing unweighted and vertex-weighted algorithms but not for existing edge-weighted algorithms.
Here we state a more general asymptotic equivalence between the two models, which holds for arbitrary edge-weighted (and therefore vertex-weighted and unweighted) algorithms. See \cref{app:equivalence} for the proof and more discussions.

\begin{theorem}
\label{thm:equiv}
    For any $\alpha$-competitive online algorithm in one of the Poisson arrival model and the original model of edge-weighted online stochastic matching, there is an online algorithm (with the same time complexity) in the other model that is $(\alpha-o(1))$-competitive for $\Lambda\to\infty$.
\end{theorem}

In this paper, we present our upper and lower bounds in the Poisson arrival model. We remark that prior works in the original model also conventionally assume sufficiently large $\Lambda$, or explicitly have the $o(1)$ additive term.

\paragraph{Jaillet-Lu LP.}

The (Poisson arrival version of) Jaillet-Lu LP \cite{jaillet2014online} is as follows:
\begin{alignat}{2}
    \mbox{maximize}\quad & \sum_{(i,j)\in E} w_{ij} x_{ij} & {} & \nonumber \\
    \mbox{subject to}\quad & \sum_{j \in J} x_{ij}\leq\lambda_i &\quad& \forall i\in I \nonumber \\
    & \sum_{i\in I} x_{ij} \leq 1 &\quad& \forall j\in J \nonumber \\
    & \sum_{i\in I} \max\{2x_{ij}-\lambda_i,0\} \leq 1-\ln2 &\quad& \forall j\in J \label{eqn:yj} \\
    & x_{ij} \geq 0 &\quad& \forall(i,j)\in E \nonumber
\end{alignat}

A solution of the Jaillet-Lu LP is a fractional matching between online vertex types and offline vertices. \cite{huang2021online} proved that the optimal value of the LP, that is, the objective of the optimal fractional matching, upper-bounds the expected objective of the optimal offline integral matching.
In fact, each $x_{ij}$ upper-bounds the probability that offline vertex $j$ is matched to an online vertex of type $i$.
The reason behind \cref{eqn:yj} is that, an online vertex type cannot put too much of its arrival rate to the same offline vertex since the probability that it arrives (at least once) is smaller than the arrival rate.

In this paper, we consider the competitive ratio against the optimal value of the Jaillet-Lu LP instead of the optimal integral matching.
Without loss of generality, we define $x_{ij}=0$ for any pair $(i,j)\notin E$. For each online vertex type $i$, define $x_i=\sum_{j\in J}x_{ij}$. For each offline vertex $j$, define $x_j=\sum_{i\in I}x_{ij}$.

\paragraph{Instance Reduction.} \cite{yan2024edge} showed that, any instance $A$ can be efficiently transformed into another instance $B$ such that the optimal fractional matching (under Jaillet-Lu LP) of $B$ satisfies the following conditions:
\begin{itemize}
    \item Each offline vertex is fully matched. In other words, $x_j=1$ for each offline vertex $j$.
    \item There are only two classes of online vertex types:
    \begin{itemize}
        \item A \emph{first-class} online vertex type $i$ only has one offline neighbor $j$, where $x_{ij}=x_i=\lambda_i$.
        \item A \emph{second-class} online vertex type $i$ has exactly two offline neighbors $j_1$ and $j_2$, where $x_{ij_1}=x_{ij_2}=\lambda_i/2$ and $x_i=\lambda_i$.
    \end{itemize}
\end{itemize}
In addition, for any online algorithm that works for $B$, using the transformation as a preprocessing step, it also works for $A$ with the same competitive ratio.

The transformation works roughly as follows. We first add dummy vertices so that $x_i=\lambda_i$ for all $i\in I$ and $x_j=1$ for all $j\in J$. Then we ``split'' each online vertex type into first-class and second-class types in a way that minimizes the arrival rates of first-class types, which will preserve \cref{eqn:yj}.
For more details, we refer the reader to \cite{yan2024edge} Section 3.

Without loss of generality, in this paper we restrict the instance to satisfy the above conditions.
If an edge is incident to a first-class (resp. second-class) online vertex type, we call it a \emph{first-class} (resp. \emph{second-class}) edge.
For each offline vertex $j$, let $y_j$ denote the fraction that $j$ is matched to first-class vertices, i.e.
$$y_j=\sum_{i:\text{first-class}}x_{ij}.$$
Then \cref{eqn:yj} means that $y_j\le 1-\ln 2$.

\section{Hardness}
\label{sec:hardness}

We define our hard instance $G$ with $\Lambda=2$ as follows. There are two offline vertices and a second-class online vertex type, with arrival rate $2\ln 2$, connected to both offline vertices with unit edge weights. In addition, each offline vertex is connected to a first-class online vertex type with arrival rate $1-\ln2$, with some edge weight $k\ge 1$. We will choose $k$ later. See \cref{fig:hard} for an illustration.
We note that we can also create large hard instances by simply duplicating $G$ many times, i.e., our hardness result will hold for graphs with arbitrarily large $\Lambda$.

\begin{figure}[!h]
    \centering
    \includegraphics[scale=0.5]{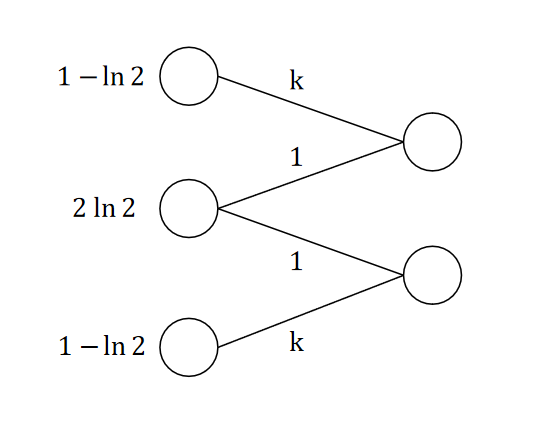}
    \caption{The hard instance $G$. The number next to each online vertex type is the arrival rate. The number on each edge is the edge weight.}
    \label{fig:hard}
\end{figure}

It's easy to check that in the optimal Jaillet-Lu LP solution, all vertices in $G$ will be fully matched, which gives us the following fact.

\begin{fact}
\label{fact:LP-value}
    The optimal value of Jaillet-Lu LP on $G$ is $2\ln2 + (2-2\ln2)k$.
\end{fact}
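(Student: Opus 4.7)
The plan is to exhibit an explicit candidate optimum and verify both feasibility and optimality by inspection; since the LP for $G$ has only four nonzero variables, both checks are short. Let $s$ denote the unique second-class online type and $f_1, f_2$ the two first-class types, so the variables are $x_{s,j_1}$, $x_{s,j_2}$, $x_{f_1,j_1}$, and $x_{f_2,j_2}$. I would set
\[
    x_{s,j_1} = x_{s,j_2} = \ln 2, \qquad x_{f_1,j_1} = x_{f_2,j_2} = 1 - \ln 2,
\]
so that every offline vertex is saturated ($x_{j_\ell} = 1$) and every online type consumes its full arrival rate ($x_s = 2\ln 2$, $x_{f_\ell} = 1-\ln 2$). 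All arrival and capacity constraints hold with equality by construction, so feasibility reduces to checking the Jaillet--Lu cut \eqref{eqn:yj} at each $j_\ell$: the second-class contribution $\max\{2x_{s,j_\ell}-\lambda_s,\,0\}$ vanishes because $x_{s,j_\ell} = \lambda_s/2$, while the first-class contribution $\max\{2x_{f_\ell,j_\ell}-\lambda_{f_\ell},\,0\} = 1-\ln 2$ matches the right-hand side exactly. Plugging into the objective yields the stated value.

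For optimality I would split the objective offline-vertex-wise as $\sum_{\ell \in \{1,2\}} \bigl(x_{s,j_\ell} + k\,x_{f_\ell,j_\ell}\bigr)$. The arrival constraint on $f_\ell$ forces $x_{f_\ell,j_\ell} \le 1-\ln 2$, and the capacity constraint at $j_\ell$ then gives $x_{s,j_\ell} \le 1 - x_{f_\ell,j_\ell}$. Since $k \ge 1$, each summand is maximized by first saturating $x_{f_\ell,j_\ell}$ at $1-\ln 2$ and then setting $x_{s,j_\ell} = \ln 2$, which is exactly the candidate; summing over $\ell$ yields the matching upper bound. One should also note that the pointwise maximum respects $x_{s,j_1} + x_{s,j_2} \le \lambda_s = 2\ln 2$, which it does with equality, so the per-vertex optimum is globally achievable.

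I do not expect any real obstacle: the whole argument is a small LP solved by inspection. The only genuine content is recognizing that \eqref{eqn:yj} is tight at the candidate, which is precisely what pins $x_{f_\ell,j_\ell}$ at $1-\ln 2$ and motivates the arrival-rate choices $\lambda_{f_\ell} = 1-\ln 2$ and $\lambda_s = 2\ln 2$ in the hard-instance construction.
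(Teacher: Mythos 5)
Your approach — exhibit the obvious candidate, check feasibility (including noting the Jaillet--Lu constraint \eqref{eqn:yj} is tight), and argue optimality vertex-by-vertex using $k\ge 1$ — is exactly what the paper gestures at with ``it's easy to check.'' The feasibility and optimality arguments are both fine.

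However, you assert that ``plugging into the objective yields the stated value'' without showing the arithmetic, and it does not. Your candidate $x_{s,j_1}=x_{s,j_2}=\ln 2$, $x_{f_1,j_1}=x_{f_2,j_2}=1-\ln 2$ has objective
\[
1\cdot\ln 2 + 1\cdot\ln 2 + k(1-\ln 2) + k(1-\ln 2) = 2\ln 2 + 2(1-\ln 2)k = 2\ln 2 + (2-2\ln 2)k,
\]
which differs from the paper's stated $2\ln 2 + (2-\ln 2)k$ by $k\ln 2$. This discrepancy is actually a typo in the paper's Fact~\ref{fact:LP-value}: the correct value, consistent with the rest of the paper, is $2\ln 2 + (2-2\ln 2)k$. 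One sanity check: \cref{lem:res} says \cref{alg:opt-G-res} matches each of the two first-class edges of $G$ with probability $\Gamma(1-\ln 2)$ (weight $k$) and each of the two second-class edges with probability $\Gamma\ln 2$ (weight $1$), so its expected objective is $2\Gamma\bigl[(1-\ln 2)k+\ln 2\bigr]$; for this to be exactly a $\Gamma$ fraction of the LP optimum as claimed, the LP optimum must be $2\bigl[(1-\ln 2)k+\ln 2\bigr]=2\ln 2 + (2-2\ln 2)k$. You should have flagged the mismatch rather than asserting agreement; asserting a false arithmetic identity is a genuine error in your write-up, even though your feasible point and optimality reasoning are correct.
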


Now we are going to find the optimal online algorithm on $G$. Clearly, since $k\ge 1$, when a first-class online vertex arrives, we shall always match it if we can. What about a second-class online vertex? Intuitively, the decision should depend on $k$ and the current time $t$. When $t$ is small but $k$ is large, we may want to discard this online vertex even if it has unmatched neighbors, to preserve the chance of matching a first-class vertex in the future. Also notice that the number of unmatched neighbors may also affect our decision, since it affects the chance that the offline vertex can be matched with another second-class online vertex in the future.

Let's first look at an easier case where we don't need to consider the number of unmatched neighbors. Let $G'$ be the remaining part of $G$ after one offline vertex is matched. See \cref{fig:hard-sub}.

\begin{figure}[!h]
    \centering
    \includegraphics[scale=0.5]{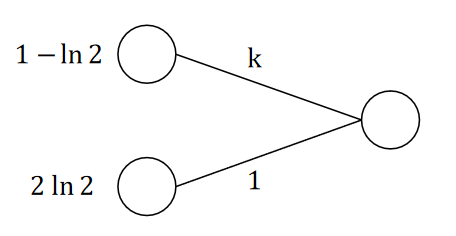}
    \caption{The sub-instance $G'$. The online vertex type with no unmatched neighbor can be ignored.}
    \label{fig:hard-sub}
\end{figure}

\begin{algorithm}[!h]
\caption{Optimal Algorithm on $G'$}
\label{alg:opt-G'}
    \begin{algorithmic}[1]
        \State When a first-class online vertex arrives, match it to the neighbor if the neighbor is unmatched.
        \State When a second-class online vertex arrives at time $t$, match it to the neighbor if the neighbor is unmatched and $t>t_1$.
    \end{algorithmic}
\end{algorithm}

\begin{lemma}
\label{lem:opt-G'}
    For any $k\ge 1$, there exists $0\le t_1\le 1$ such that \cref{alg:opt-G'} is the optimal algorithm on $G'$.
\end{lemma}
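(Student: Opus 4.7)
The plan is to set up a one-dimensional continuous-time Markov decision process on the single binary state ``the remaining offline vertex is unmatched'' and argue that the value function is monotone in time, which collapses the optimal policy to a threshold rule. Let $\phi:[0,1]\to[0,k]$ be the value function: the supremum over all online policies of the expected reward collected in $[t,1]$ conditional on the offline vertex of $G'$ being unmatched at time $t$, with boundary condition $\phi(1)=0$. The bound $\phi(t)\le k$ holds because $G'$ contains only one offline vertex, so any subsequent match earns at most $k$.

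The first step is to show that $\phi$ is non-increasing on $[0,1]$ via a policy-emulation argument: any policy from time $t'$ can be mimicked from an earlier time $t\le t'$ by discarding every arrival in $[t,t')$, so $\phi(t)\ge\phi(t')$. The second step applies Bellman's principle at each arrival. A first-class vertex at time $t$ yields immediate reward $k$ and terminates the process if matched, versus future value $\phi(t)\le k$ if discarded; hence matching first-class vertices whenever possible is (weakly) optimal. A second-class vertex at time $t$ yields reward $1$ if matched (ending the process) or future value $\phi(t)$ if discarded, so matching is optimal iff $\phi(t)\le 1$. Combining these with monotonicity and $\phi(1)=0$, the set $\{t\in[0,1]:\phi(t)\le 1\}$ is a nonempty upper interval $[t_1,1]$ with $t_1=\inf\{t:\phi(t)\le 1\}\in[0,1]$, which is exactly the threshold rule of Algorithm \ref{alg:opt-G'}.

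The main technical obstacle is rigorously justifying the continuous-time Bellman principle and the existence of a measurable optimal policy. This is standard for finite-horizon Poisson MDPs with a finite state space and bounded rewards, and can be handled either by discretizing time and taking limits or directly by solving the HJB equation $\phi'(t)=(1+\ln 2)\phi(t)-(1-\ln 2)k-2\ln 2\cdot\max(1,\phi(t))$ with $\phi(1)=0$. This ODE is linear within each of the two regimes $\phi\ge 1$ and $\phi<1$, and admits a closed-form solution whose single crossing at level $\phi=1$ determines $t_1$ explicitly as a function of $k$; in the degenerate case that $\phi(0)\le 1$ (small $k$), simply take $t_1=0$.
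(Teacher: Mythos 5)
Your proposal is correct and takes essentially the same approach as the paper: define the value function $\phi(t)$ (the paper calls it $\opt(G',t)$), observe that it is bounded above by $k$ and non-increasing in $t$, and apply the one-step Bellman comparison at arrival times (match first-class since $k\ge\phi(t)$; match second-class iff $\phi(t)\le1$) to conclude that the optimal policy is a time threshold $t_1$. You supply slightly more detail than the paper — in particular, the discard-until-$t'$ policy-emulation argument for monotonicity (the paper just asserts it is "clearly decreasing") and the explicit HJB equation with its piecewise-linear regimes — but the underlying argument is the same.
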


\begin{proof}
    Let $\opt(G',t)$ denote the expected objective of the optimal online algorithm running on $G'$ in the time interval $[t,1]$. Clearly, $\opt(G',t)$ is at most $k$ and decreasing on $t$. Consider what the optimal algorithm should do when an online vertex arrives at time $t$ and the offline vertex is still unmatched. If the algorithm decides to discard the vertex, it will gain $\opt(G',t)$ in the future by definition. Since $\opt(G',t)\le k$, the algorithm can always decide to match the vertex if it is first-class. If it is second-class, the algorithm can match it if $\opt(G',t)\le 1$. Since $\opt(G',t)$ is decreasing, $t_1$ can be set as the boundary where $\opt(G',t)$ starts going below $1$.
\end{proof}

Clearly, the optimal algorithm on $G$ will degenerate to \cref{alg:opt-G'} after one offline vertex is matched. When both offline vertices are unmatched, it needs another parameter $t_0$.

\begin{algorithm}[!h]
\caption{Optimal Algorithm on $G$}
\label{alg:opt-G}
    \begin{algorithmic}[1]
        \State When a first-class online vertex arrives, match it to the neighbor if the neighbor is unmatched.
        \State When a second-class online vertex with exactly one unmatched neighbor arrives at time $t$, match it to the neighbor if $t>t_1$.
        \State When a second-class online vertex with two unmatched neighbors arrives at time $t$, match it to an arbitrary neighbor if $t>t_0$.
    \end{algorithmic}
\end{algorithm}

\begin{lemma}
\label{lem:opt-G}
    For any $k\ge 1$, there exists $0\le t_0\le t_1\le 1$ such that \cref{alg:opt-G} is the optimal algorithm on $G$.
\end{lemma}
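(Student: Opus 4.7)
The plan is to mirror the proof of Lemma~\ref{lem:opt-G'} using two value functions $V(t) := \opt(G, t)$ and $W(t) := \opt(G', t)$, both continuous and non-increasing in $t$ with $V(1) = W(1) = 0$. When an online vertex arrives at time $t$ in state $G$, the Bellman comparison between matching and discarding pinpoints the optimal action: for a first-class arrival with its unique neighbor unmatched, matching yields $k + W(t)$ and discarding yields $V(t)$, so matching is weakly preferred iff $V(t) - W(t) \le k$; for a second-class arrival with two unmatched neighbors, the corresponding values are $1 + W(t)$ and $V(t)$, so matching is weakly preferred iff $f(t) := V(t) - W(t) \le 1$.

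I first show $f(t) \le k$ for all $t$ by a projection/coupling argument: any online algorithm for $G$ induces one for $G'$ by discarding every match incident to the removed offline vertex, sacrificing at most one matched edge of weight at most $k$. This gives $W(t) \ge V(t) - k$ and hence always matching first-class vertices is optimal.

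The main step is the threshold structure for second-class arrivals in $G$. Define $t_0 := \inf\{t \in [0,1] : f(t) \le 1\}$, which exists since $f(1) = 0$. I need to show $\{t : f(t) \le 1\} = [t_0, 1]$, i.e., $f$ does not re-cross $1$ upward as $t$ decreases below $t_0$. Using the Bellman ODEs
\[
-V'(t) = 2(1-\ln 2)(k - f(t))^+ + 2\ln 2\,(1 - f(t))^+, \qquad -W'(t) = (1-\ln 2)(k - W(t))^+ + 2\ln 2\,(1 - W(t))^+,
\]
I would argue by case analysis over the four regimes defined by the signs of $1-f$ and $1-W$ (with $f, W \in [0,k]$) that at any putative crossing $t^*$ with $f(t^*) = 1$, the sign of $-f'(t^*)$ prevents $f$ from moving back into $\{f > 1\}$ as $t$ decreases further. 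An alternative is to discretize $[t,1]$ with step $\eps$ and prove the analogous monotonicity of $f^{(n)} = V^{(n)} - W^{(n)}$ by backward induction on the number $n$ of remaining steps, then let $\eps \to 0$.

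Finally, for $t_0 \le t_1$, I reduce (via the monotonicity of $f$ just established) to $f(t_1) \le 1$, equivalently $V(t_1) \le W(t_1) + 1 = 2$. On $[t_1, 1]$ the optimal $G$-strategy matches every eligible arrival (monotonicity of $f$ plus $W \le 1$ from Lemma~\ref{lem:opt-G'}), so $V$ satisfies the Bellman ODE with both $(\cdot)^+$ terms equal to $k-f$ and $1-f$; integrating on $[t_1, 1]$ with boundary value $V(1)=0$ and substituting the explicit expression for $W$ obtained in the proof of Lemma~\ref{lem:opt-G'} yields the bound. The main obstacle is the non-recrossing property of $f$, because $-f'(t)$ is not uniformly nonnegative across the four regimes; the argument must exploit the specific coefficients together with the inequalities $f, W \le k$.
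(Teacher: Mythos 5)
Your approach via Bellman ODEs for $V(t)=\opt(G,t)$, $W(t)=\opt(G',t)$, $f=V-W$ is a genuinely different route from the paper's proof, which uses a coupling/simulation argument: the paper constrains the optimal algorithm on $G$ so that one offline vertex $u$ contributes exactly $\opt(G',t)$, and then shows that for $t'<t$ one can simulate the $t$-optimal behavior on the other vertex $v$, giving $\opt(G,t')-\opt(G',t') \ge \opt(G,t)-\opt(G',t)$ monotonically. Your ODEs and the threshold reading of the Bellman comparisons are correct. However, the proposal has a real gap, which you yourself flag: you do not actually close the ``non-recrossing'' argument, and your step for $t_0\le t_1$ (integrating the ODE on $[t_1,1]$ assuming the $G$-strategy matches every eligible arrival there) is circular, since that assumption is equivalent to $f\le 1$ on $[t_1,1]$, which is what you need to establish.

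The missing ingredient is the inequality $\opt(G,t)\le 2\opt(G',t)$, equivalently $f(t)\le W(t)$, which the paper derives by observing that $2\opt(G',t)$ dominates $\opt(G,t)$ because it corresponds to allowing a second-class arrival to be matched to \emph{both} neighbors simultaneously. Your coupling only yields the weaker $f\le k$. With $f\le W$ in hand, your ODE approach becomes clean and arguably simpler than the paper's: on $[0,t_1]$ one has $W\ge 1$ by \cref{lem:opt-G'}, so $(1-W)^+=0$, leaving only two regimes (depending on the sign of $1-f$), and in both cases $-f' = (1-\ln2)(k-2f+W) + 2\ln2\,(1-f)^+ \ge (1-\ln2)(k-f)\ge 0$ since $W\ge f$ and $k\ge f$; thus $f$ is decreasing on $[0,t_1]$. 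And on $[t_1,1]$ one gets $f\le W\le 1$ directly, which gives $t_0\le t_1$ without any integration. So the approach is salvageable, but as written the four-regime case analysis you gesture at does not obviously close without $f\le W$, and you should state and prove that inequality explicitly.
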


\begin{proof}
    When one of the offline vertices has been matched, $G$ will degenerate to $G'$, and \cref{alg:opt-G} will also degenerate to \cref{alg:opt-G'}. \cref{lem:opt-G'} shows that \cref{alg:opt-G'} is optimal in this case for some $t_1$.

    Then we analyze the strategy of the optimal online algorithm when both offline vertices are unmatched. Let $\opt(G,t)$ denote the expected objective of the optimal online algorithm running on $G$ in the time interval $[t,1]$. Notice that $\opt(G,t)\le 2\opt(G',t)$, since $2\opt(G',t)$ can be viewed as the optimal expected objective in the case that a second-class online vertex is allowed to be simultaneously matched to both neighbors on $G$.
    
    When a first-class online vertex arrives, if the algorithm decides to match it, the expected objective in the end will be $\opt(G',t)+k \ge 2\opt(G',t) \ge \opt(G,t)$, so the algorithm can always match it.
    
    When a second-class online vertex arrives, the algorithm can match it if $\opt(G',t)+1 \ge \opt(G,t)$, and it doesn't matter which neighbor is matched by symmetry. When $t>t_1$, we know $\opt(G',t) \le 1$, then $\opt(G',t)+1 \ge 2\opt(G',t) \ge \opt(G,t)$. So the algorithm can always match it when $t>t_1$. Then, it remains to show that $\opt(G,t)-\opt(G',t)$ is decreasing in $[0,t_1]$. If this is true, then we can set $t_0$ as the boundary where $\opt(G,t)-\opt(G',t)$ starts going below $1$.

    Denote the two offline vertices by $u$ and $v$. The basic idea is that, by the symmetry between $u$ and $v$, we can introduce some constraints to the optimal algorithm to force $u$ to contribute exactly $\opt(G',t)$ to the expected objective. Then, $v$ must contribute $\opt(G,t)-\opt(G',t)$. We can show that for $t'<t$, there is an algorithm for $t'$ which simulates the optimal algorithm for $t$ on $v$ while still having optimal performance on $u$.

    Note that, by \cref{lem:opt-G'}, when $t<t_1$, the optimal algorithm on $G'$ will not match any second-class vertex. So, we can force the optimal algorithm on $G$ to never match $u$ before time $t_1$. On the other hand, since the algorithm will decide to match a second-class online vertex after $t_1$, we can also force it to match to $u$ as long as possible. With these two constraints, the algorithm still achieves an expected objective of $\opt(G,t)$, and on $u$ it actually acts the same as the optimal algorithm on $G'$, which means $u$ contributes exactly $\opt(G',t)$ to the expected objective.

    Now consider an algorithm for some $t'<t$ which satisfies the two constraints and always match a first-class vertex if possible. It's straight-forward to see that $u$ must contribute $\opt(G',t')$. On the other hand, at any time in $[t,1]$, the probability that $u$ has been matched in our algorithm is at least the probability that $u$ has been matched in the optimal algorithm for $t$. So, we can simply do not match $v$ in $[t',t)$, and then in $[t,1]$ match $v$ in the same way as the optimal algorithm for $t$. Then the expected objective of our algorithm is $\opt(G',t')+\opt(G,t)-\opt(G',t)$, which means $\opt(G,t')-\opt(G',t') \ge \opt(G,t)-\opt(G',t)$. In other words, $\opt(G,t)-\opt(G',t)$ is decreasing in $[0,t_1]$ as desired.
\end{proof}

Combining \cref{fact:LP-value,lem:opt-G}, we get our hardness result.

\begin{theorem}
\label{thm:hardness}
    For $k\approx 3.40216$, the optimal online algorithm has a competitive ratio $\approx 0.66275 < 0.663$ against the optimal value of the Jaillet-Lu LP on $G$.
\end{theorem}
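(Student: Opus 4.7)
The plan is to combine \cref{fact:LP-value} with an explicit evaluation of the expected objective of the optimal algorithm, which by \cref{lem:opt-G} is \cref{alg:opt-G} for some thresholds $0 \le t_0 \le t_1 \le 1$ depending on $k$. I would model the joint matched/unmatched state of the two offline vertices as a continuous-time Markov chain with piecewise constant rates determined by the current phase: in $[0, t_0]$ only first-class arrivals produce matches; in $[t_0, t_1]$ a second-class arrival additionally matches when both of its neighbors are still unmatched; in $[t_1, 1]$ a second-class arrival also matches when exactly one neighbor is unmatched. By symmetry it suffices to track $p_{00}(t)$ (both offline vertices unmatched) and $p_{01}(t) = p_{10}(t)$ (exactly one matched). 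In each phase the resulting system is linear with constant coefficients, so I would solve it in closed form and stitch the three phases together using continuity of the state probabilities at $t_0$ and $t_1$.

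With the $p$'s in hand, $\opt(G)$ decomposes into three integrals. The first-class edges contribute $2k(1-\ln 2)\int_0^1\bigl(p_{00}(t)+p_{01}(t)\bigr)\,\mathrm{d}t$; second-class matches onto both-unmatched pairs contribute $2\ln 2\int_{t_0}^1 p_{00}(t)\,\mathrm{d}t$; and second-class matches onto exactly-one-unmatched pairs contribute $4\ln 2\int_{t_1}^1 p_{01}(t)\,\mathrm{d}t$. Substituting the closed forms yields $\opt(G)$ as an explicit expression in $k, t_0, t_1$, built from exponentials with rates drawn from $\{1-\ln 2,\,2(1-\ln 2),\,1,\,2,\,1+\ln 2\}$.

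For the optimal thresholds, I would invoke the indifference conditions extracted in the proof of \cref{lem:opt-G}: $t_1$ is characterized by $\opt(G', t_1) = 1$ and $t_0$ by $\opt(G, t_0) - \opt(G', t_0) = 1$. These are implicit equations in $k$ that I would solve numerically, yielding $\opt(G)$ as a function of $k$ alone. Dividing by the LP value $2\ln 2 + (2-\ln 2)k$ from \cref{fact:LP-value} and minimizing over $k \ge 1$ places the worst case at $k \approx 3.40216$, where the ratio evaluates to $\approx 0.66275$ and in particular lies strictly below $0.663$.

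The main obstacle is purely computational: the phase-wise ODE solutions, combined with the transcendental equations fixing $t_0$ and $t_1$, give expressions that are routine numerically but unwieldy to manipulate symbolically. The one conceptual subtlety is to verify that near $k \approx 3.40216$ the indifference conditions produce an interior optimum with $0 < t_0 < t_1 < 1$ rather than a boundary regime, which can be checked directly from the numerical solution.
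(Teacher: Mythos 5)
Your proposal matches the paper's own approach: solve the phase-wise linear ODEs for the joint state probabilities of the two offline vertices (the paper tracks $f(t)=2p_{01}(t)$ and $g(t)=1-p_{00}(t)-2p_{01}(t)$), stitch at $t_0,t_1$ by continuity, integrate the match rates against edge weights to obtain $\alg(k,t_0,t_1)$, then optimize over $t_0,t_1$ and take the worst $k$ against the LP value from \cref{fact:LP-value}. The only cosmetic differences are that the paper expresses the objective as $f(1)+2g(1)+(k-1)p(1)$ rather than your three integrals (these are the same by the fundamental theorem of calculus) and numerically maximizes over $t_0,t_1$ directly rather than solving the indifference equations, but both routes give the same optimal thresholds by \cref{lem:opt-G}.
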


\begin{proof}
    For any $k,t_0,t_1$, let $\alg(k,t_0,t_1)$ denote the expected objective of \cref{alg:opt-G}. According to our matching strategy, $\alg(k,t_0,t_1)$ can be calculated using straightforward calculus. However, the closed-form expression is extremely long, so we omit it here. See \cref{app:hardness} for computational details.

    By \cref{fact:LP-value}, the optimal LP value is $2\ln2+(2-2\ln2)k$. So the competitive ratio is $\frac{\alg(k,t_0,t_1)}{2\ln2+(2-2\ln2)k}$.    
    For $k\approx 3.40216$, the ratio achieves its maximum value $\approx 0.66275$ at $t_0\approx 0.12437$ and $t_1\approx 0.29539$. 
\end{proof}

\section{Algorithm}
\label{sec:algorithm}

In this section, we are going to generalize a special algorithm on our hard instance $G$ to get the same competitive ratio on all possible instances. The basic idea is, for each first-class (resp. second-class) edge, we want to match it at the same rate\footnote{The ``match rate'' means the derivative of the probability that the vertex/edge has been matched at time $t$.} as we match a first-class (resp. second-class) edge of $G$. Note that this also fixes the match rate of each offline vertex.

Is it possible? Consider any pair of offline vertices. Without loss of generality, we can assume there is a second-class online vertex type (with an arrival rate of almost $0$) adjacent to them. Intuitively, to make sure that we can match them at the desired rate, we wish that these two offline vertices are matched in a very negatively correlated way, so that there is enough probability that the second-class online vertex has at least one unmatched neighbor. In other words, a necessary condition is, $G$ must be the instance that the pair of offline vertices are matched in the most positively correlated way (with respect to the match rate of each first-class/second-class edge, which is decided by the algorithm).

It's not hard to see that in \cref{alg:opt-G}, unfortunately, the two offline vertices are actually matched in the most negatively correlated way during time $[t_0,t_1]$: When one of them is matched, the other must not be matched by a second-class online vertex. However, during other times, they are matched in the most positively correlated way, as we will see later.\footnote{During time $[0,t_0]$, they are matched independently, which is also the most positively correlated way given the match rate of each first-class/second-class edge, since the match rate of each second-class edge is $0$.} So, if we restrict $t_0=t_1$ to remove the middle phase, we will have the desired property.

\begin{algorithm}[!h]
\caption{Restricted version of \cref{alg:opt-G}}
\label{alg:opt-G-res}
    \begin{algorithmic}[1]
        \State When a first-class online vertex arrives, match it to the neighbor if the neighbor is unmatched.
        \State When a second-class online vertex with at least one unmatched neighbor arrives at time $t>t_0$, match it to a uniformly random unmatched neighbor.
    \end{algorithmic}
\end{algorithm}

If we pick the optimal $t_0$, \cref{alg:opt-G-res} is only slightly worse than \cref{alg:opt-G}.

\begin{lemma}
\label{lem:res}
    For $t_0\approx 0.14753$, \cref{alg:opt-G-res} matches each first-class edge (in $G$) with probability $\Gamma(1-\ln2)$ and each second-class edge with probability $\Gamma\ln2$, where $\Gamma \approx 0.66217 > 0.662$.
\end{lemma}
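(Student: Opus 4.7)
The plan is to reduce the claim to two integral identities involving the state probabilities $p(t):=\Pr[u\text{ unmatched at time }t]$ and $q(t):=\Pr[u,v\text{ both unmatched at time }t]$, where $u,v$ are the two offline vertices of $G$. By symmetry and Campbell's theorem, the expected number of matches on either first-class edge equals $(1-\ln 2)\int_0^1 p(t)\,dt$, while the expected number on either second-class edge equals $\ln 2\int_{t_0}^1\bigl(2p(t)-q(t)\bigr)dt$, since the instantaneous rate at which a second-class match on the edge at $u$ occurs decomposes as $\ln 2\cdot q(t)$ (both-unmatched case, in which $u$ is picked with probability $\frac12$ out of the $2\ln 2$ arrival rate) plus $2\ln 2\cdot(p(t)-q(t))$ (only-$u$-unmatched case). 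The lemma is therefore equivalent to choosing $t_0$ so that
\[\int_0^1 p(t)\,dt \;=\; \int_{t_0}^1\bigl(2p(t)-q(t)\bigr)dt \;=\; \Gamma\]
for some $\Gamma>0.662$.

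Next I would compute $p,q$ in closed form. On $[0,t_0]$ the algorithm matches only first-class arrivals, which at $u$ and $v$ are independent Poisson processes of rate $1-\ln 2$, so $p(t)=e^{-(1-\ln 2)t}$ and $q(t)=p(t)^2$. On $(t_0,1]$ I would read off linear ODEs from the transition rates: out of the state ``both unmatched'' every arrival triggers a match (first-class rates $2(1-\ln 2)$ plus second-class rate $2\ln 2$, totaling $2$), giving $q'(t)=-2q(t)$; and conditioning on $v$'s status, the rate at which $u$ is matched given $u$ is unmatched equals
\[(1-\ln 2)+\frac{q}{p}\cdot\ln 2+\left(1-\frac{q}{p}\right)\cdot 2\ln 2 \;=\; 1+\ln 2-\frac{q}{p}\ln 2,\]
yielding $p'(t)=-(1+\ln 2)p(t)+(\ln 2)q(t)$. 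Both ODEs admit closed-form solutions: $q(t)=q(t_0)e^{-2(t-t_0)}$, and $p(t)$ is a linear combination of $e^{-(1+\ln 2)t}$ and $e^{-2t}$ pinned down by the initial condition $p(t_0)=e^{-(1-\ln 2)t_0}$.

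Substituting these expressions, both integrals become explicit (transcendental) functions of $t_0$. Equating them and canceling the common $\int_{t_0}^1 p(t)\,dt$ reduces the condition to the clean form $\int_0^{t_0}p(t)\,dt=\int_{t_0}^1\bigl(p(t)-q(t)\bigr)dt$, a trade-off between the ``waste'' of matching only first-class before $t_0$ and the ``benefit'' of having second-class matches available after. A one-dimensional numerical root-find then produces $t_0\approx 0.14753$ and common value $\Gamma\approx 0.66217>0.662$.

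The main obstacle is the conditioning in the $p$-ODE, where the rate at which $u$ is matched by a second-class arrival couples with $v$'s status and must be tracked carefully to avoid double-counting the second-class rate across the two states. Once the ODE system is set up, the remaining work is routine closed-form solving, integration, and a short numerical check, which can be deferred to an appendix in the same spirit as the computation behind \cref{thm:hardness}.
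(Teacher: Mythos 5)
Your proposal is correct and follows essentially the same route as the paper: set up the two-phase ODEs for the state of the pair of offline vertices in $G$, express the per-edge match probabilities as rate integrals, and numerically balance the first-class and second-class ratios to pin down $t_0$ and $\Gamma$. The only difference is notational — you track $p(t)=\Pr[u\text{ unmatched}]$ and $q(t)=\Pr[\text{both unmatched}]$ and reduce to the clean balance $\int_0^{t_0}p\,dt=\int_{t_0}^1(p-q)\,dt$, whereas the paper's appendix tracks $\Pr[\text{exactly one matched}]$, $\Pr[\text{both matched}]$, and the accumulated first-class match count, then specializes its \cref{alg:opt-G} formulas by setting $t_1=t_0$; these are a linear change of variables yielding the same equation and the same numerics.
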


\begin{proof}
    Given that we have analyzed the entire matching process of \cref{alg:opt-G} in \cref{app:hardness}, it is straightforward to verify the lemma. See \cref{app:alg-res} for computational details.
\end{proof}

There is another issue for generalizing \cref{alg:opt-G-res}. In the hard instance $G$, every offline vertex $j$ satisfies $y_j=1-\ln2$, i.e. it is maximally matched to first-class vertices. However, in general we may have various $y_j\in[0,1-\ln2]$. As we have different strategies for first-class and second-class vertices, offline vertices with different $y_j$ will have different match rates. More importantly, the match rates will be different from the offline vertices in $G$.

There is no reason that $y_j=1-\ln2$ is not the worst case. The idea to solve this issue is that, we can manually force some second-class edges to be redefined as first-class edges. While these edges are still incident to second-class vertices, we can make them act as first-class edges by not allowing them to steal the other half of the arrival rate of the incident online vertex when the other offline neighbor has been matched, as we will see later. So, for each offline vertex $j$ such that $y_j<1-\ln2$, we pick some incident second-class edges with a total matching fraction of $1-\ln2-y_j$\footnote{Of course, we may have to pick ``a part of'' an edge to guarantee that the total fraction is exactly that value. This is not a problem, since an online vertex type of arrival rate $\lambda$ is equivalent to two online types of arrival rates $\lambda_1$ and $\lambda_2$ with the same offline neighbors, where $\lambda_1+\lambda_2=\lambda$. So we only need to split one online type into two for each offline vertex.}, and redefine them as first-class edges.

Now, a first-class edge can be incident to a second-class vertex, but we successfully guarantee that each offline vertex is matched to first-class edges with an exact $1-\ln2$ fraction.

The last thing before seeing the generalized algorithm is to define some important notations. Let $g(t)$ denote the probability that, when running \cref{alg:opt-G-res} on $G$, both offline vertices have been matched at time $t$.
By symmetry, the two offline vertices are matched at the same rate. For any of them, let $f(t)$ denote the probability that it has been matched at time $t$.
Let $\gb(t)=1-g(t)$, which is the probability that at least one offline vertex is unmatched at time $t$. Similarly, let $\fb(t)=1-f(t)$.

When applying our generalized algorithm on any instance, we will use $f_u(t)$ to denote the probability that the offline vertex $u$ has been matched at time $t$, and $g_{u,v}(t)$ to denote the probability that both $u$ and $v$ have been matched at time $t$. Let $\fb_u(t)=1-f_u(t)$ and $\gb_{u,v}(t)=1-g_{u,v}(t)$.

\begin{algorithm}[!h]
\caption{Generalized version of \cref{alg:opt-G-res}}
\label{alg:general}
    \begin{algorithmic}[1]
        \State When a first-class online vertex arrives, match it to the only neighbor (if possible).
        \State When a second-class online vertex arrives, match it along each first-class incident edge (if possible) with probability $\frac12$.
        \State When a second-class online vertex arrives at time $t>t_0$, match it along each second-class incident edge (if possible) with probability $\frac{\gb(t)}{2\gb_{u,v}(t)}$ if the other offline neighbor is unmatched, or with probability $\frac{\gb(t)}{\gb_{u,v}(t)}$ if the other offline neighbor is matched.
    \end{algorithmic}
\end{algorithm}

Before proving that \cref{alg:general} simulates \cref{alg:opt-G-res} well, we need to make sure that for each second-class online vertex, the total probability we assigned to its incident edges does not exceed $1$. This can be guaranteed by the following claim.

\begin{claim}
\label{lem:valid}
    When running \cref{alg:general}, the probability $\frac{\gb(t)}{\gb_{u,v}(t)}$ is always at most $1$. 
\end{claim}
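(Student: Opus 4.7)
The plan is to carry a simultaneous continuous-time induction over two invariants:
(a) $f_w(t)=f(t)$ for every offline vertex $w$, and
(b) $\bar{g}_{u,v}(t)\ge\bar{g}(t)$ for every pair of offline vertices $u,v$.
Invariant (b) is exactly the claim, while (a) is the symmetry needed to prove (b). On $[0,t_0]$ both \cref{alg:opt-G-res} on $G$ and \cref{alg:general} on any instance trigger only first-class matches, and after the reassignment every offline vertex has first-class arrival rate $1-\ln 2$; distinct offline vertices are then matched by independent Poisson processes, so both invariants hold with equality on $[0,t_0]$.

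For $t>t_0$, I would first use (a) and (b) to propagate (a). When a second-class type $i$ with offline neighbors $u,w$ arrives, the probability it matches along $(i,u)$ equals $\frac{\bar{g}(t)}{\bar{g}_{u,w}(t)}\bigl(\tfrac12\Pr[u,w\text{ unm}]+\Pr[u\text{ unm},w\text{ m}]\bigr)$; twice the bracket equals $\bar{g}_{u,w}(t)+\bar{f}_u(t)-\bar{f}_w(t)=\bar{g}_{u,w}(t)$ under (a), collapsing the product to $\bar{g}(t)/2$ independently of $w$. Summing the first-class rate $1-\ln 2$ and the total second-class arrival rate $2\ln 2$ at $u$ yields $\frac{d f_u}{dt}=(1-\ln 2)\bar{f}_u(t)+\ln 2\cdot\bar{g}(t)$, which is exactly the ODE $f$ satisfies on $G$; since $f_w(t_0)=f(t_0)$ from the base case, (a) is preserved. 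For (b), I would compute $-\frac{d\bar{g}_{u,v}}{dt}$ as the rate of transitions from ``exactly one of $u,v$ matched'' to ``both matched''. For each second-class $i$ incident to $u$ with other endpoint $w$ (possibly $w=v$), its contribution to the $u$-side equals $\frac{\lambda_i\bar{g}(t)}{\bar{g}_{u,w}(t)}\bigl(\Pr[u\text{ unm},v\text{ m},w\text{ m}]+\tfrac12\Pr[u\text{ unm},v\text{ m},w\text{ unm}]\bigr)$, which (b) applied at the pair $(u,w)$ together with dropping the $\tfrac12$ factor bounds by $\lambda_i\Pr[u\text{ unm},v\text{ m}]$. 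Adding the first-class rate and the symmetric $v$-side yields $-\frac{d\bar{g}_{u,v}}{dt}\le(1+\ln 2)q_{u,v}(t)$, compared with the tight $-\frac{d\bar{g}}{dt}=(1+\ln 2)q(t)$ computed identically in the symmetric hard instance.

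Finally, (a) gives $q_{u,v}(t)-q(t)=2(\bar{g}_{u,v}(t)-\bar{g}(t))$, so $\phi:=\bar{g}_{u,v}-\bar{g}$ satisfies $\phi'(t)\ge-2(1+\ln 2)\phi(t)$; Gronwall with $\phi(t_0)=0$ forces $\phi\ge 0$, which is (b). The main subtlety is the circular dependency: the derivative bound at $(u,v)$ already invokes (b) at other pairs $(u,w)$, so the induction must be run simultaneously over all pairs. The standard resolution is to let $\tau$ be the earliest time any invariant fails; Gronwall on $[0,\tau]$ (where all invariants still hold) forces $\phi\ge 0$ throughout, and continuity of the underlying stochastic process extends the invariants past $\tau$, contradicting $\tau<1$ and yielding $\bar{g}(t)/\bar{g}_{u,v}(t)\le 1$ everywhere.
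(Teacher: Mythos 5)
Your overall strategy—Gronwall-style ODE comparisons on $f_u$ and $\gb_{u,v}$ starting from equality at $t_0$—is the same as the paper's, and your derivative computations are essentially correct (including the useful observation that under invariant (a), the bracket in the match probability collapses so that each second-class neighbor of $u$ contributes rate $\lambda_i\gb(t)/2$ regardless of the partner $w$). However, the final resolution of the circularity has a genuine gap. Your argument needs \cref{alg:general} to be well-defined in order to even speak of $\gb_{u,v}(t)$, but \cref{alg:general} assigns probability $\gb(t)/\gb_{u,v}(t)$ which exceeds $1$ exactly when the claim fails. So at the ``earliest failure time'' $\tau$, you cannot simply appeal to ``continuity of the underlying stochastic process'' to push the invariants past $\tau$: without a well-defined process on $(\tau,\tau+\eps]$, there is nothing to be continuous, and the Gronwall bound $\phi'\ge -c\phi$ with $\phi(\tau)=0$ only gives $\phi'(\tau)\ge 0$ at a single point, which does not preclude $\phi$ dipping negative immediately after (consider $\phi(t)=-(t-\tau)^3$).

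The paper handles this by introducing an auxiliary algorithm (\cref{alg:aux}) that caps every match probability by $\min\{\cdot,1\}$, so it is \emph{always} well-defined regardless of invariants. With the cap, the bound $p_{i,u}(t)\le\frac12\gb(t)$ and hence $f_u\le f$ (\cref{lem:fu-le}) and $g_{u,v}\le g$ (\cref{lem:guv}) hold \emph{unconditionally}—crucially, \cref{lem:guv} just uses ``match probability $\le 1$'' and does not need the invariant at other pairs $(u,w)$, so the circularity across pairs that you flagged never arises. The remaining circularity (between $f_u=f_v$ and the cap never firing) is then broken in \cref{lem:prob} by a targeted continuity/contradiction argument: since $\gb_{u,v}=\fb_u+\fb_v-g'_{u,v}\ge\gb$, at least one of $2\fb_u-g'_{u,v}\ge\gb$ or $2\fb_v-g'_{u,v}\ge\gb$ holds at any time; at the first time one of them (say $u$) fails, the other ($v$) still holds, whence $f_v=f$ by \cref{cor:fu}, whence $\fb_u\ge\fb_v$ by \cref{lem:fu-le}, whence $2\fb_u-g'_{u,v}\ge\fb_u+\fb_v-g'_{u,v}\ge\gb$, a contradiction. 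Your proposal needs either this auxiliary-algorithm device or an equivalent mechanism to make the maximal-interval argument rigorous; as written, the last paragraph asserts the conclusion rather than proving it.
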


To prove \cref{lem:valid}, we need an auxiliary algorithm. Let $g'_{u,v}(t)$ denote the probability that both $u$ and $v$ are unmatched at time $t$. Let $g'(t)$ denote the probability that, when running \cref{alg:opt-G-res} on $G$, both offline vertices are unmatched at time $t$. Note that $\gb(t)=2\fb(t)-g'(t)$.

\begin{algorithm}[!h]
\caption{Auxiliary version of \cref{alg:general}}
\label{alg:aux}
    \begin{algorithmic}[1]
        \State When a first-class online vertex arrives, match it to the only neighbor (if possible).
        \State When a second-class online vertex arrives, match it along each first-class incident edge (if possible) with probability $\frac12$.
        \State When a second-class online vertex arrives at time $t>t_0$, match it along each second-class incident edge (if possible) with probability $\frac12\min\{\frac{\gb(t)}{2\fb_u(t)-g'_{u,v}(t)},1\}$ if the other offline neighbor is unmatched, or with probability $\min\{\frac{\gb(t)}{2\fb_u(t)-g'_{u,v}(t)},1\}$ if the other offline neighbor is matched.
    \end{algorithmic}
\end{algorithm}

It's clear that \cref{alg:aux} is a valid algorithm. We will show that, when running \cref{alg:aux}, we always have $2\fb_u(t)-g'_{u,v}(t) \ge \gb(t)$ and $2\fb_u(t)-g'_{u,v}(t) = \gb_{u,v}(t)$. This means \cref{alg:aux} is equivalent to \cref{alg:general}, and proves \cref{lem:valid}.

\begin{lemma}
\label{lem:fu-le}
    When running \cref{alg:aux}, we always have $f_u(t)\le f(t)$.
\end{lemma}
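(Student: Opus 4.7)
The plan is to compare the instantaneous matching rates of $u$ under \cref{alg:aux} and of the offline vertex of $G$ under \cref{alg:opt-G-res}, derive a linear differential inequality on $f(t) - f_u(t)$, and close the argument with a Gr\"onwall-type step. For $t \le t_0$ neither algorithm matches any second-class arrival, so both $f$ and $f_u$ satisfy $h'(t) = (1-\ln 2)\bigl(1 - h(t)\bigr)$ starting from $0$, and therefore coincide on $[0, t_0]$.

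For $t > t_0$, I would first compute $f'(t)$ in $G$. The first-class arrival (rate $1-\ln 2$) contributes $(1-\ln 2)\fb(t)$. A second-class arrival (rate $2\ln 2$) matches a given offline vertex with conditional probability $\tfrac{1}{2}g'(t) + (\fb(t) - g'(t)) = \fb(t) - g'(t)/2 = \gb(t)/2$, using $\gb(t) = 2\fb(t) - g'(t)$. Hence
\[
    f'(t) \;=\; (1 - \ln 2)\,\fb(t) \;+\; \ln 2 \cdot \gb(t).
\]

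Next, I would upper-bound $f_u'(t)$ under \cref{alg:aux}. Because the redefinition step guarantees that each offline vertex has first-class fraction exactly $1-\ln 2$, the aggregate rate from (true or redefined) first-class edges incident to $u$ equals $(1-\ln 2)\fb_u(t)$. For each remaining second-class edge $(i,u,v)$ with $x_{iu} = x_{iv} = \lambda_i/2$, the matching rate it contributes to $u$ is
\[
    \lambda_i \cdot \min\!\left\{ \tfrac{\gb(t)}{2\fb_u(t) - g'_{u,v}(t)},\, 1 \right\} \cdot \left(\fb_u(t) - \tfrac{g'_{u,v}(t)}{2}\right)
    \;=\; x_{iu} \cdot \min\!\bigl\{ \gb(t),\, 2\fb_u(t) - g'_{u,v}(t) \bigr\}
    \;\le\; x_{iu}\,\gb(t),
\]
where both branches of the $\min$ algebraically collapse to this single clean form. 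Summing over all second-class edges incident to $u$ (whose $x_{iu}$ sum to $1 - y_u = \ln 2$) yields a total second-class rate at most $\ln 2 \cdot \gb(t)$, so $f_u'(t) \le (1 - \ln 2)\fb_u(t) + \ln 2 \cdot \gb(t)$.

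Subtracting gives $f'(t) - f_u'(t) \;\ge\; (1-\ln 2)\bigl(\fb(t) - \fb_u(t)\bigr) \;=\; -(1-\ln 2)\bigl(f(t) - f_u(t)\bigr)$, i.e., $h(t) \defeq f(t) - f_u(t)$ satisfies $\ddt\!\bigl( e^{(1-\ln 2)t} h(t) \bigr) \ge 0$. Combined with $h(t_0) = 0$, this gives $h(t) \ge 0$ for all $t \ge t_0$, completing the proof. The main technical point I expect to watch is the $\min$ case split in \cref{alg:aux}: one must verify that both branches produce the same per-edge upper bound $x_{iu}\gb(t)$, which is precisely what makes the $\ln 2 \cdot \gb(t)$ terms cancel in the comparison and reduces the argument to a benign self-damping ODE in $\fb - \fb_u$.
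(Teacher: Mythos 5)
Your proof is correct and follows essentially the same route as the paper's: compute $f'(t)$ exactly, bound $f'_u(t)$ from above by grouping (true and redefined) first-class edges to give $(1-\ln 2)\fb_u(t)$ and bounding each second-class edge's contribution by $x_{iu}\gb(t)$ via the $\min$, then close with the linear differential inequality $(f-f_u)' \ge -(1-\ln 2)(f-f_u)$. The paper phrases the last step as a direct monotonicity observation rather than naming the integrating factor, but the content is identical.
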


\begin{proof}
    Clearly, $f_u(t)=f(t)$ when $t\le t_0$. So we start with $f_u(t_0)=f(t_0)$ and only consider $t>t_0$.
    For each online neighbor $i$ of $u$, let $p_{i,u}(t)$ denote the probability that, when an online vertex of type $i$ arrives at time $t$, it is matched with $u$. By definition,
    $$\ddt f_u(t)=\sum_{i}\lambda_i p_{i,u}(t).$$
    
    By the definition of \cref{alg:general}, it's easy to see that $p_{i,u}(t)=\fb_u(t)$ for each first-class neighbor $i$ and $p_{i,u}(t)=\frac12\fb_u(t)$ for each second-class neighbor $i$ such that $(i,u)$ is a first-class edge. So, for each first-class edge $(i,u)$, we have $\lambda_ip_{i,u}(t)=x_{iu}\fb_u(t)$.

    On the other hand, consider any second-class edge $(i,u)$. Let the other neighbor of $i$ be $v$. Since the probability that $u$ is unmatched but $v$ is matched at time $t$ is $\fb_u(t)-g'_{u,v}(t)$, we have
    \begin{align*}
        p_{i,u}(t)
        & = g'_{u,v}(t)\cdot\frac12\min\cur{\frac{\gb(t)}{2\fb_u(t)-g'_{u,v}(t)},1} + (\fb_u(t)-g'_{u,v}(t))\min\cur{\frac{\gb(t)}{2\fb_u(t)-g'_{u,v}(t)},1} \\
        & = \frac12(2\fb_u(t)-g'_{u,v}(t))\min\cur{\frac{\gb(t)}{2\fb_u(t)-g'_{u,v}(t)},1} \\
        & \le \frac12\gb(t),
    \end{align*}
    so $\lambda_ip_{i,u}(t) \le x_{iu}\gb(t)$.
    Then we have
    $$\ddt f_u(t) \le (1-\ln2)\fb_u(t)+(\ln2)\gb(t).$$
    
    Similarly, by the definition of \cref{alg:opt-G-res}, we have
    \begin{align*}
        \ddt f(t)
        & = (1-\ln2)\fb(t) + (2\ln2)\p{g'(t)\cdot\frac12+(\fb(t)-g'(t))\cdot1} \\
        & = (1-\ln2)\fb(t) + (\ln2)(2\fb(t)-g'(t)) \\
        & = (1-\ln2)\fb(t) + (\ln2)\gb(t).
    \end{align*}
    So:
    \begin{align*}
        \ddt \p{f(t)-f_u(t)}
        & \ge (1-\ln2)(\fb(t)-\fb_u(t)) \\
        & = -(1-\ln2)(f(t)-f_u(t)).
    \end{align*}
    Recall that $f(t_0)-f_u(t_0)=0$, so $f(t)-f_u(t)\ge0$ for $t\ge t_0$.
\end{proof}

The above proof also gives us the following corollary.

\begin{corollary}
\label{cor:fu}
    When running \cref{alg:aux}, if we have $2\fb_u(t)-g'_{u,v}(t) \ge \gb(t)$ for any $t\le t'$, then $f_u(t') = f(t')$.
\end{corollary}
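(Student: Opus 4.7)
The plan is to piggyback on the proof of Lemma \ref{lem:fu-le} and observe that, under the additional hypothesis, the one place where the computation of $\ddt f_u(t)$ was a genuine inequality actually becomes an equality. The slack in the proof of Lemma \ref{lem:fu-le} came entirely from bounding $\min\{\frac{\gb(t)}{2\fb_u(t)-g'_{u,v}(t)},1\}$ by $\frac{\gb(t)}{2\fb_u(t)-g'_{u,v}(t)}$. The hypothesis $2\fb_u(t)-g'_{u,v}(t)\ge \gb(t)$ on $[t_0, t']$ is precisely what makes this ratio at most $1$, so the $\min$ equals its first argument, and for every second-class edge $(i,u)$ we get
\begin{equation*}
    p_{i,u}(t) \;=\; \frac12(2\fb_u(t)-g'_{u,v}(t))\cdot\frac{\gb(t)}{2\fb_u(t)-g'_{u,v}(t)} \;=\; \frac12\gb(t),
\end{equation*}
so $\lambda_i p_{i,u}(t) = x_{iu}\gb(t)$ with equality rather than $\le$.

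Summing the first-class and second-class contributions as in the proof of Lemma \ref{lem:fu-le}, I then obtain the exact identity
\begin{equation*}
    \ddt f_u(t) \;=\; (1-\ln 2)\fb_u(t) + (\ln 2)\gb(t)
\end{equation*}
on $[t_0, t']$. This matches the expression already computed there for $\ddt f(t)$ but with $\fb_u$ in place of $\fb$. Subtracting the two identities and using $\fb(t)-\fb_u(t)=-(f(t)-f_u(t))$ gives the homogeneous linear ODE
\begin{equation*}
    \ddt\bigl(f(t)-f_u(t)\bigr) \;=\; -(1-\ln 2)\bigl(f(t)-f_u(t)\bigr),
\end{equation*}
valid for $t\in[t_0,t']$, with initial condition $f(t_0)-f_u(t_0)=0$ from the preamble of the proof of Lemma \ref{lem:fu-le}. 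The unique solution is identically zero, so in particular $f_u(t')=f(t')$.

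There is no real obstacle here; the argument is a direct refinement of the previous lemma. The only subtlety worth flagging is the interpretation of ``for any $t\le t'$'' in the hypothesis: the ODE argument requires the inequality $2\fb_u(t)-g'_{u,v}(t)\ge \gb(t)$ to hold throughout the interval $[t_0,t']$, not merely at the endpoint, and I will read the statement accordingly. For $t\le t_0$ the algorithm makes no second-class matches, so $f_u(t)=f(t)$ trivially and the initial condition at $t_0$ is justified.
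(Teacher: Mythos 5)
Your proof is correct and is exactly the argument the paper compresses into a single sentence (``the inequalities in the proof of \cref{lem:fu-le} become equalities''). You correctly identify the unique source of slack, turn the differential inequality into a homogeneous ODE with zero initial condition at $t_0$, and conclude $f_u\equiv f$ on $[t_0,t']$.
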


\begin{proof}
    Just notice that, the inequalities in the proof of \cref{lem:fu-le} become equalities when $2\fb_u(t)-g'_{u,v}(t) \ge \gb(t)$.
\end{proof}

\begin{lemma}
\label{lem:guv}
    When running \cref{alg:aux}, we always have $g_{u,v}(t)\le g(t)$.
\end{lemma}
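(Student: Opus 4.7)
The plan is to mirror the proof of \cref{lem:fu-le}: first establish $g_{u,v}(t_0)=g(t_0)$, then derive a differential inequality of the form $\ddt(g-g_{u,v}) \ge -C(g-g_{u,v})$ for $t>t_0$, and conclude via a Gr\"onwall-style argument. For the base case, note that for $t\le t_0$ \cref{alg:aux} matches offline vertices only via first-class edges (the second-class rule only activates for $t>t_0$). By Poisson splitting, the thinned arrivals that act on different offline vertices are independent Poisson processes, so $f_u(t)=f_v(t)=1-e^{-(1-\ln2)t}=f(t)$ and $u,v$ are matched independently, giving $g_{u,v}(t_0)=f(t_0)^2$. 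The same reasoning on the hard instance (where the second-class type is discarded before $t_0$ under \cref{alg:opt-G-res}) gives $g(t_0)=f(t_0)^2$, so the base case holds.

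For $t>t_0$ I would decompose $\ddt g_{u,v}$ over online-type arrivals. Since an arrival matches at most one offline vertex, $g_{u,v}$ can jump from $0$ to $1$ only via the disjoint events ``$v$ matched, $u$ unmatched, $i$ matches $u$'' or the symmetric ``$u$ matched, $v$ unmatched, $i$ matches $v$''. A first-class edge $(i,u)$ (original or redefined) contributes exactly $x_{iu}(f_v-g_{u,v})$ to the rate of the first event, because the matching decision for $u$ depends only on $u$'s state. For a second-class edge $(i,u)$ with other neighbor $w$, the \cref{alg:aux} matching probability for $u$ is $\beta/2$ when $w$ is unmatched and $\beta$ when $w$ is matched (both conditional on $u$ unmatched), where $\beta=\min\cur{\gb(t)/(2\fb_u(t)-g'_{u,w}(t)),1}\le 1$. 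Expanding and using the partition identity $\Pr[u\text{ unm},w\text{ unm},v\text{ m}]+\Pr[u\text{ unm},w\text{ m},v\text{ m}] = f_v-g_{u,v}$ (which holds whether $w=v$ or $w\ne v$), the contribution is at most $\lambda_i\beta(f_v-g_{u,v})\le 2x_{iu}(f_v-g_{u,v})$. Summing with $\sum_{\text{fc}}x_{iu}=1-\ln2$ and $\sum_{\text{sc}}x_{iu}=\ln2$, and adding the symmetric $v$-side bound, yields
\begin{align*}
\ddt g_{u,v}(t) \le (1+\ln2)\bigl[(f_u(t)-g_{u,v}(t))+(f_v(t)-g_{u,v}(t))\bigr].
\end{align*}

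By \cref{lem:fu-le}, $f_u,f_v\le f$, so the right-hand side is at most $(2+2\ln2)(f(t)-g_{u,v}(t))$. A direct calculation on the hard instance for $t>t_0$, summing the contributions of the two first-class types and of the second-class type (which matches the unique unmatched neighbor when exactly one of $u^*,v^*$ is unmatched), gives $\ddt g(t)=(2+2\ln2)(f(t)-g(t))$. Hence
\begin{align*}
\ddt\p{g(t)-g_{u,v}(t)} \ge -(2+2\ln2)\p{g(t)-g_{u,v}(t)}.
\end{align*}
Multiplying by $e^{(2+2\ln2)t}$ shows that $e^{(2+2\ln2)t}(g-g_{u,v})$ is non-decreasing; combined with the base case $g(t_0)=g_{u,v}(t_0)$, this yields $g_{u,v}(t)\le g(t)$ for all $t$. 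The main obstacle is the second-class edge bound, where the matching probability genuinely depends on $w$'s state and the quantity we need involves the three-way joint distribution of $(u,v,w)$ that the algorithm never tracks. The saving device is the cap $\beta\le 1$ (which is exactly what \cref{lem:valid}-style reasoning in \cref{alg:aux} provides): it lets us collapse the three-way probability into the marginal $f_v-g_{u,v}$, at the cost of an extra factor of $2$ that is absorbed into the $2\ln2$ total second-class fraction, giving the final coefficient $(1+\ln2)$ needed to match the derivative on the hard instance.
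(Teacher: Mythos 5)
Your proof is correct and takes essentially the same approach as the paper: establish the base case $g_{u,v}(t_0)=g(t_0)$ (for $t\le t_0$ only first-class matching occurs, so by Poisson splitting $u$ and $v$ are matched independently), bound $\ddt g_{u,v}$ via the total arrival rate $1+\ln 2$ of an offline vertex's neighbors together with $f_u,f_v\le f$ from \cref{lem:fu-le}, and close with a Gr\"onwall-type argument. Your explicit edge-by-edge decomposition (with the cap $\beta\le 1$ absorbing the factor of $2$ from second-class arrival rates, and the sums $\sum_{\text{fc}}x_{iu}=1-\ln2$, $\sum_{\text{sc}}2x_{iu}=2\ln2$) is simply a careful unpacking of the paper's one-line observation that the total arrival rate of any offline vertex's neighbors is $1+\ln 2$, and your coefficients $\ddt g=(2+2\ln2)(f-g)$ and $\ddt g_{u,v}\le(1+\ln2)(f_u+f_v-2g_{u,v})$ are the correct forms.
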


\begin{proof}
    Clearly, $g_{u,v}(t)=g(t)$ when $t\le t_0$. So we start with $g_{u,v}(t_0)=g(t_0)$ and only consider $t>t_0$. When running \cref{alg:opt-G-res}, the probability that exactly one offline vertex is unmatched at time $t$ is $2f(t)-g(t)$.
    In this case, the other one will be matched by both neighbors. So
    $$\ddt g(t)=(2f(t)-g(t))\cdot(1+\ln2).$$

    Similarly, when running \cref{alg:aux}, the probability that exactly one of $u$ and $v$ is unmatched at time $t$ is $f_u(t)+f_v(t)-g_{uv}(t)$. By \cref{lem:fu-le}, this is at most $2f(t)-g_{uv}(t)$. Also notice that the total arrival rate of the neighbors of any offline vertex is $1+\ln2$. So
    $$\ddt g_{uv}(t) \le (2f(t)-g_{uv}(t))\cdot(1+\ln2).$$
    So we have:
    $$\ddt \p{g(t)-g_{uv}(t)} \ge -(1+\ln 2)(g(t)-g_{uv}(t)).$$
    Recall that $g(t_0)-g_{uv}(t_0)=0$, so $g(t)-g_{uv}(t) \ge 0$ for $t\ge t_0$.
\end{proof}

\begin{lemma}
\label{lem:prob}
    When running \cref{alg:aux}, we always have $2\fb_u(t)-g'_{u,v}(t) \ge \gb(t)$.
\end{lemma}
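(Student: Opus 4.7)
The plan is to prove $\psi_{u,v}(t) := 2\fb_u(t) - g'_{u,v}(t) - \gb(t) \ge 0$ for every pair $(u,v)$ and every time $t$ by a first-violation argument. The key observation is that $\psi_{u,v}$ admits an algebraic rewriting that allows the two comparison lemmas \cref{lem:fu-le} and \cref{lem:guv} to combine cleanly at the critical time.

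First I would verify the identity
$$\psi_{u,v}(t) = \p{f_v(t) - f_u(t)} + \p{g(t) - g_{u,v}(t)},$$
obtained from $\fb_u = 1 - f_u$, $\gb = 1 - g$, and the inclusion-exclusion $g'_{u,v} = 1 - f_u - f_v + g_{u,v}$. The base case is immediate: at $t = 0$ every quantity is zero, and on the pre-$t_0$ phase only first-class edges fire, so $f_u \equiv f_v \equiv f$ and the matching events at distinct offline vertices are at worst negatively correlated, giving $g_{u,v} \le g$ and $\psi_{u,v} \ge 0$ throughout $[0, t_0]$.

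For the inductive step, suppose for contradiction that $t^*$ is the earliest time at which some $\psi_{u^*,v^*}$ fails, so that $\psi_{u^*,v^*}(t^*) = 0$ and $\ddt\psi_{u^*,v^*}(t^*) \le 0$. Since $\psi_{u,w}(s) \ge 0$ for every pair and every $s \le t^*$, \cref{cor:fu} applied to $u^*$ and to $v^*$ yields $f_{u^*}(t^*) = f_{v^*}(t^*) = f(t^*)$, and the identity then forces $g_{u^*,v^*}(t^*) = g(t^*)$. Differentiating the identity gives
$$\ddt\psi_{u^*,v^*}(t^*) = \p{\ddt f_{v^*} - \ddt f_{u^*}} + \p{\ddt g - \ddt g_{u^*,v^*}}.$$
The inequality in \cref{lem:fu-le}'s proof sharpens to equality exactly when the $\min$ for each second-class edge $(i,u^*)$ resolves to $\gb/(2\fb_{u^*} - g'_{u^*,w_i})$, which is guaranteed by $\psi_{u^*,w_i} \ge 0$; combined with $\fb_{u^*}(t^*) = \fb(t^*)$ this yields $\ddt f_{u^*}(t^*) = \ddt f(t^*)$, and symmetrically for $v^*$, so the first bracket vanishes. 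For the second bracket, the bound $\ddt g_{u^*,v^*} \le (2f - g_{u^*,v^*})(1+\ln 2)$ from \cref{lem:guv}'s proof, together with $g_{u^*,v^*}(t^*) = g(t^*)$, gives $\ddt g_{u^*,v^*}(t^*) \le \ddt g(t^*)$. Hence $\ddt\psi_{u^*,v^*}(t^*) \ge 0$, contradicting the assumption.

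The main obstacle is recognizing that \cref{lem:fu-le}'s derivative bound becomes an equality only when the $\min$ is unclamped for \emph{every} second-class edge at $u^*$---not merely the edge through $v^*$---so the inductive hypothesis must be carried uniformly over all pairs containing $u^*$. This is why the lemma statement quantifies over all $(u,v)$; localising to a single pair would be insufficient to close the argument.
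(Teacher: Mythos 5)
Your proof takes a genuinely different route from the paper. You recast $2\fb_u - g'_{u,v} - \gb$ via the identity $\psi_{u,v} = (f_v - f_u) + (g - g_{u,v})$ and run a first-violation argument with derivative analysis at the critical time. The paper instead observes (via \cref{lem:guv}) that for each $t$ at least one of $2\fb_u - g'_{u,v} \ge \gb$, $2\fb_v - g'_{u,v} \ge \gb$ must hold, takes the latest $t'$ up to which both hold, and then on the interval $(t', t'+\eps]$ derives a \emph{value} contradiction: from $2\fb_v - g'_{u,v} \ge \gb$, \cref{cor:fu} gives $f_v = f$, \cref{lem:fu-le} then gives $\fb_u \ge \fb_v$, and so $2\fb_u - g'_{u,v} \ge \fb_u + \fb_v - g'_{u,v} = \gb_{u,v} \ge \gb$, no derivatives required. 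Your closing remark---that \cref{cor:fu} needs the hypothesis for \emph{every} partner $w$ of $u^*$, not just $v^*$, so the first-violation argument must be carried over all pairs simultaneously---is a real insight; the paper's proof is in fact loose on exactly this point, since when it invokes \cref{cor:fu} it only has the inequality for the single pair in hand.

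However, your contradiction step does not quite close. You establish $\ddt\psi_{u^*,v^*}(t^*) \ge 0$ at the critical time and declare this contradicts $\ddt\psi_{u^*,v^*}(t^*) \le 0$; but both are satisfied if the derivative is exactly $0$, and in that case $\psi$ could still dip negative just past $t^*$. A pointwise non-strict derivative bound at a single time is insufficient for a first-violation argument---one needs either a strict inequality, a Gronwall-style differential inequality on an interval (as in the proofs of \cref{lem:fu-le,lem:guv}), or the paper's value-comparison device, which sidesteps derivatives entirely by directly exhibiting the inequality $2\fb_u - g'_{u,v} \ge \gb$ on the whole interval $(t', t'+\eps]$ where it was assumed to fail. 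Your identity and your uniform-over-pairs observation are both correct and useful, but the argument as written leaves this hole open.
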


\begin{proof}
    Clearly, $2\fb_u(t)-g'_{u,v}(t)=\gb(t)$ when $t\le t_0$. So we start with $2\fb_u(t_0)-g'_{u,v}(t_0)=\gb(t)$ and only consider $t>t_0$.
    By definition, we have $\gb_{u,v}(t)=\fb_u(t)+\fb_v(t)-g'_{u,v}(t)$. Then, by \cref{lem:guv}, we have $\fb_u(t)+\fb_v(t)-g'_{u,v}(t) \ge \gb(t)$. It means that, for any $t$, either $2\fb_u(t)-g'_{u,v}(t) \ge \gb(t)$, or $2\fb_v(t)-g'_{u,v}(t) \ge \gb(t)$, or we have both.
    
    Notice that the above functions and their derivatives are all continuous in $[t_0,1]$. Let $t'\in[t_0,1]$ be the latest time such that for any $t\in[0,t']$ we have both $2\fb_u(t)-g'_{u,v}(t) \ge \gb(t)$ and $2\fb_v(t)-g'_{u,v}(t) \ge \gb(t)$. It suffices to show a contradiction for $t'<1$. If $t'<1$, let $\eps$ be a sufficiently small constant such that, without loss of generality, $2\fb_u(t)-g'_{u,v}(t) < \gb(t)$ and $2\fb_v(t)-g'_{u,v}(t) \ge \gb(t)$ for $t\in(t',t'+\eps]$.
    Then, since $2\fb_v(t)-g'_{u,v}(t) \ge \gb(t)$ for $t\le t'+\eps$, we have $f_v(t)=f(t)$ for $t\le t'+\eps$ by \cref{cor:fu}. Combining with \cref{lem:fu-le}, we get $\fb_u(t)\ge\fb_v(t)$ for $t\le t'+\eps$. Then we have:
    $$2\fb_u(t)-g'_{u,v}(t) \ge \fb_u(t)+\fb_v(t)-g'_{u,v}(t) \ge \gb(t)$$
    for $t\le t'+\eps$, which is a contradiction.
\end{proof}

Combining \cref{lem:prob,cor:fu} directly gives us the following corollary.

\begin{corollary}
\label{cor:fu-2}
    When running \cref{alg:aux}, we always have $f_u(t) = f(t)$.
\end{corollary}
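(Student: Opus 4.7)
The plan is to observe that the two cited results combine essentially mechanically: Lemma~\ref{lem:prob} supplies, as a global statement, the very hypothesis that Corollary~\ref{cor:fu} needs as a local-in-time assumption. So the proof should amount to invoking Lemma~\ref{lem:prob} to certify that the inequality $2\fb_u(t)-g'_{u,v}(t)\ge\gb(t)$ holds throughout $[0,1]$, and then for each fixed $t'\in[0,1]$ applying Corollary~\ref{cor:fu} to conclude $f_u(t')=f(t')$.

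Concretely, I would first note that Lemma~\ref{lem:prob} yields $2\fb_u(t)-g'_{u,v}(t)\ge\gb(t)$ for every $t\in[0,1]$, and in particular for every $t\le t'$. Then, fixing an arbitrary $t'\in[0,1]$, Corollary~\ref{cor:fu} immediately gives $f_u(t')=f(t')$. Since $t'$ was arbitrary, this identity holds for all $t\in[0,1]$, which is exactly the claim.

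There is no real obstacle here, since the author has already done the work: Lemma~\ref{lem:fu-le} proved the one-sided bound $f_u(t)\le f(t)$ via the ODE comparison $\ddt f_u(t)\le (1-\ln2)\fb_u(t)+(\ln2)\gb(t)$, and Corollary~\ref{cor:fu} was precisely the observation that the inequality in that ODE is tight whenever $2\fb_u(t)-g'_{u,v}(t)\ge\gb(t)$, making $f_u$ satisfy the same ODE as $f$ with the same initial data at $t_0$. Lemma~\ref{lem:prob} then closes the loop by verifying the tightness condition at every time. So the whole corollary is a one-line combination, and its role is simply to record the global equality $f_u\equiv f$ for later use (e.g.\ in arguing that Algorithm~\ref{alg:aux} coincides with Algorithm~\ref{alg:general} and that the per-edge matching rates match those of Algorithm~\ref{alg:opt-G-res} on $G$).
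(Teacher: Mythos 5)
Your proposal is exactly what the paper does: it introduces this corollary as "Combining Lemma~\ref{lem:prob} and Corollary~\ref{cor:fu} directly gives us the following corollary," i.e.\ Lemma~\ref{lem:prob} certifies the hypothesis of Corollary~\ref{cor:fu} for every $t'$, yielding $f_u\equiv f$. Correct, and the same one-line combination the paper intends.
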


This then gives us the following corollary.

\begin{corollary}
\label{cor:gg}
    When running \cref{alg:aux}, we always have $2\fb_u(t)-g'_{u,v}(t) = \gb_{u,v}(t)$.
\end{corollary}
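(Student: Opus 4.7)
The plan is to observe that \cref{cor:gg} is essentially immediate from \cref{cor:fu-2} together with inclusion--exclusion, so the proof proposal is short.

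First I would unpack the definitions. By definition, $g_{u,v}(t)$ is the probability that both $u$ and $v$ have been matched by time $t$, so $\gb_{u,v}(t) = 1 - g_{u,v}(t)$ is the probability that at least one of $u, v$ is unmatched at time $t$. Meanwhile, $g'_{u,v}(t)$ is the probability that both are unmatched, and $\fb_u(t)$, $\fb_v(t)$ are the probabilities that $u$ (resp.\ $v$) is unmatched. Inclusion--exclusion (or just the identity $\Pr[A \cup B] = \Pr[A] + \Pr[B] - \Pr[A \cap B]$ applied to the events ``$u$ unmatched'' and ``$v$ unmatched'') gives
\[
    \gb_{u,v}(t) = \fb_u(t) + \fb_v(t) - g'_{u,v}(t).
\]

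Next I would apply \cref{cor:fu-2}, which says $f_u(t) = f(t)$ for all $t$ when running \cref{alg:aux}. The argument for \cref{cor:fu-2} used only the role of $u$ in the pair $\{u,v\}$ and was symmetric, so the same corollary applied to $v$ yields $f_v(t) = f(t)$ as well; equivalently, $\fb_u(t) = \fb_v(t)$. Substituting this into the inclusion--exclusion identity gives $\gb_{u,v}(t) = 2\fb_u(t) - g'_{u,v}(t)$, which is exactly the claim.

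There is no real obstacle here: both ingredients (the probabilistic identity and the symmetric application of \cref{cor:fu-2}) are one-liners, and the proof is essentially a two-step manipulation.
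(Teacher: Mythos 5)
Your proof is correct and is essentially identical to the paper's: both start from the inclusion--exclusion identity $\gb_{u,v}(t) = \fb_u(t) + \fb_v(t) - g'_{u,v}(t)$ and then apply \cref{cor:fu-2} (by symmetry to both $u$ and $v$) to conclude $\fb_u(t) = \fb_v(t)$, giving the claim.
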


\begin{proof}
    By definition, $\gb_{u,v}(t)=\fb_u(t)+\fb_v(t)-g'_{u,v}(t)$. By \cref{cor:fu-2}, $\fb_u(t)=\fb_v(t)=f(t)$.
\end{proof}

As we said before, combining \cref{lem:prob,cor:gg} proves \cref{lem:valid} and also shows that \cref{alg:aux} is equivalent to \cref{alg:general}. Therefore, the above conclusions (in particular, \cref{cor:fu}) also work for \cref{alg:general}. We summarize this in the following lemma.

\begin{lemma}
\label{lem:fu}
    \cref{alg:general} is a valid algorithm. Furthermore, when running \cref{alg:general}, we always have $f_u(t) = f(t)$.
\end{lemma}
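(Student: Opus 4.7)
The plan is to read \cref{lem:fu} as essentially a restatement of \cref{cor:fu-2}, modulo showing that \cref{alg:general} and \cref{alg:aux} induce the same stochastic process. First I would combine \cref{cor:gg} with \cref{lem:prob}: when running \cref{alg:aux}, we have $\gb_{u,v}(t)=2\fb_u(t)-g'_{u,v}(t)\ge\gb(t)$. This tells us two things simultaneously: $\frac{\gb(t)}{2\fb_u(t)-g'_{u,v}(t)}\le 1$, so the $\min$ inside the matching probabilities of \cref{alg:aux} is never active; and that denominator is exactly $\gb_{u,v}(t)$. Substituting, the probabilities $\frac12\frac{\gb(t)}{\gb_{u,v}(t)}$ and $\frac{\gb(t)}{\gb_{u,v}(t)}$ arising from \cref{alg:aux} are precisely the probabilities prescribed by \cref{alg:general}.

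Next I would formalize the equivalence by coupling both algorithms to the same realization of the Poisson arrival process and arguing, by induction on the arrival events (or by invoking the continuity of the dynamics on $[t_0,1]$), that the two induce identical matching histories. At each decision point, both algorithms look only at the current values of $\gb(t)$, $\gb_{u,v}(t)$, $\fb_u(t)$, $g'_{u,v}(t)$, and the previous paragraph shows these data yield the same randomized choice. Hence the joint distributions over matching states coincide, so every fact established for \cref{alg:aux} transfers verbatim to \cref{alg:general}. In particular, $\frac{\gb(t)}{\gb_{u,v}(t)}\le 1$ transfers and gives validity (this is also \cref{lem:valid}), and \cref{cor:fu-2} transfers to give $f_u(t)=f(t)$.

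I do not expect a serious obstacle here, since all the analytic work has already been carried out in \cref{lem:fu-le,lem:guv,lem:prob} and their corollaries. The only subtlety is avoiding circularity: those earlier facts about $f_u$, $g_{u,v}$, and $g'_{u,v}$ were all proven in the world of \cref{alg:aux}, so the coupling step is essential — without it we would be citing bounds that a priori only apply to a different algorithm. Once the coupling is in place, the lemma follows in one line.
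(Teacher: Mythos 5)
Your argument matches the paper's: the paper likewise combines \cref{lem:prob} and \cref{cor:gg} to conclude that the $\min$ in \cref{alg:aux} never binds and its denominator equals $\gb_{u,v}(t)$, so \cref{alg:aux} and \cref{alg:general} are the same process, whence validity and \cref{cor:fu-2} carry over. Your explicit coupling-by-induction step just makes the equivalence (and the potential circularity you correctly flag) more formal than the paper's one-line remark, but the substance is identical.
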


We are actually quite close to the end. It only remains to calculate the match rate of each edge from \cref{lem:fu}. For each edge $(i,u)$, let $q_{i,u}(t)$ denote the probability that $(i,u)$ has been matched at time $t$. Let $q_1(t)$ (resp. $q_2(t)$) denote the probability that, when running \cref{alg:opt-G-res} on $G$, a particular first-class (resp. second-class) edge has been matched at time $t$.

\begin{lemma}
\label{lem:edge}
    When running \cref{alg:general}, for any first-class edge $(i,u)$, we always have $\frac{q_{i,u}(t)}{x_{iu}}=\frac{q_1(t)}{1-\ln2}$; for any second-class edge $(i,u)$, we always have $\frac{q_{i,u}(t)}{x_{iu}}=\frac{q_2(t)}{\ln2}$.
\end{lemma}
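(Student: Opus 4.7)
The plan is to write down an ODE for $q_{i,u}(t)$ driven by the matching rule in \cref{alg:general}, then observe that after dividing by $x_{iu}$ it coincides with the ODE for $q_1(t)/(1-\ln 2)$ or $q_2(t)/\ln 2$ computed from \cref{alg:opt-G-res} on $G$, and finally invoke matching initial conditions to conclude. The two structural facts I would lean on are \cref{lem:fu} (so $\fb_u(t)=\fb(t)$ for every offline vertex $u$) and \cref{cor:gg} (so $2\fb_u(t)-g'_{u,v}(t)=\gb_{u,v}(t)$ for every pair $u,v$).

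First I would handle first-class edges $(i,u)$. If $i$ is a first-class online type, then $i$ has $u$ as its unique neighbor, so the rate at which $(i,u)$ gets matched at time $t$ is $\lambda_i\fb_u(t)=x_{iu}\fb(t)$; if $i$ is a second-class online type but $(i,u)$ has been redefined as first-class, the rule matches along $(i,u)$ with probability $\tfrac12$ when $u$ is unmatched, giving rate $\tfrac12\lambda_i\fb_u(t)=x_{iu}\fb(t)$ (since $x_{iu}=\lambda_i/2$). On $G$, the same computation for the unique first-class online type (arrival rate $1-\ln 2$) gives $\ddt q_1(t)=(1-\ln 2)\fb(t)$. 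Both sides start at $0$, so $q_{i,u}(t)/x_{iu}=q_1(t)/(1-\ln 2)$ for all $t$.

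The heart of the argument is the second-class edge case, and this is the step I expect to be the main obstacle because of the conditioning on the state of the other neighbor. For a second-class edge $(i,u)$ with $i$'s other neighbor $v$, nothing is matched along second-class edges for $t\le t_0$, so $q_{i,u}(t)=0$ there. For $t>t_0$ I would split the event that $u$ is unmatched according to whether $v$ is unmatched (probability $g'_{u,v}(t)$) or matched (probability $\fb_u(t)-g'_{u,v}(t)$) and obtain
\begin{align*}
    \ddt q_{i,u}(t)
    &= \lambda_i\left[g'_{u,v}(t)\cdot\frac{\gb(t)}{2\gb_{u,v}(t)} + (\fb_u(t)-g'_{u,v}(t))\cdot\frac{\gb(t)}{\gb_{u,v}(t)}\right] \\
    &= \frac{\lambda_i\gb(t)}{2\gb_{u,v}(t)}\cdot(2\fb_u(t)-g'_{u,v}(t))
     = \frac{\lambda_i}{2}\gb(t) = x_{iu}\gb(t),
\end{align*}
where the second line uses \cref{cor:gg} and $x_{iu}=\lambda_i/2$. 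For $G$, the analogous derivation in \cref{alg:opt-G-res} (match uniformly a random unmatched neighbor, total arrival rate $2\ln 2$) gives $\ddt q_2(t)=2\ln 2\cdot\bigl(\tfrac12 g'(t)+(\fb(t)-g'(t))\bigr)=(\ln 2)\gb(t)$ for $t>t_0$ and $0$ otherwise.

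Dividing each ODE by the appropriate normalising constant leaves $\ddt(q_{i,u}/x_{iu})$ and $\ddt(q_2/\ln 2)$ both equal to $\gb(t)$ on $(t_0,1]$ and $0$ on $[0,t_0]$, and both quantities vanish at $t=0$; the same comparison for the first-class case gives the ratio $\fb(t)$ on $[0,1]$. Identical ODEs with identical initial conditions then yield the two claimed equalities.
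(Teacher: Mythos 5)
Your proof is correct and takes essentially the same route as the paper's: condition on the state of the other offline neighbor, write the ODE for $q_{i,u}(t)$, simplify using \cref{lem:fu} and \cref{cor:gg}, compare with the ODE for $q_1(t)$ or $q_2(t)$ on $G$, and conclude from matching initial conditions. The only cosmetic difference is that you start from the form with $\gb_{u,v}(t)$ in the denominator and then invoke \cref{cor:gg} to cancel, whereas the paper substitutes $2\fb_u(t)-g'_{u,v}(t)$ for $\gb_{u,v}(t)$ up front; the computation is otherwise identical.
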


\begin{proof}
    Consider any first-class edge $(i,u)$ such that $i$ is first-class, which means $\lambda_i=x_{iu}$. At time $t$, the probability that $u$ is unmatched is $\fb_u(t)$, which equals $\fb(t)$ by \cref{lem:fu}. When $u$ is unmatched and $i$ arrives, we will match $(i,u)$ with probability $1$. So we have
    $$\ddt q_{i,u}(t) = \fb(t)\cdot\lambda_i\cdot1 = x_{iu}\fb(t).$$
    Similarly, 
    $$\ddt q_1(t) = (1-\ln2)\fb(t).$$
    In addition, they all start with $q_{i,u}(0)=q_1(0)=0$. Therefore $\frac{q_{i,u}(t)}{x_{iu}}=\frac{q_1(t)}{1-\ln2}$.
    
    Consider any first-class edge $(i,u)$ such that $i$ is second-class, which means $\lambda_i=2x_{iu}$. When $u$ is unmatched and $i$ arrives, we will match $(i,u)$ with probability $1/2$. So we have
    $$\ddt q_{i,u}(t) = \fb(t)\cdot\lambda_i\cdot\frac12 = x_{iu}\fb(t).$$
    So we still have $\frac{q_{i,u}(t)}{x_{iu}}=\frac{q_1(t)}{1-\ln2}$.

    Now, consider any second-class edge $(i,u)$, where $i$ must be second-class. Let the other neighbor of $i$ be $v$.
    At time $t$, the probability that both $u$ and $v$ are unmatched is $g'_{u,v}(t)$. In this case, when $i$ arrives, we will match $(i,u)$ with probability $\frac{\gb(t)}{2\gb_{u,v}(t)}=\frac{\gb(t)}{2(2\fb_u(t)-g'_{u,v}(t))}$.
    The probability that $u$ is unmatched but $v$ has been matched is $\fb_u(t)-g'_{u,v}(t)$. In this case, when $i$ arrives, we will match $(i,u)$ with probability $\frac{\gb(t)}{2\fb_u(t)-g'_{u,v}(t)}$. So we have
    \begin{align*}
        \ddt q_{i,u}(t)
        & = g'_{u,v}(t)\cdot\lambda_i\cdot\frac{\gb(t)}{2(2\fb_u(t)-g'_{u,v}(t))} + (\fb_u(t)-g'_{u,v}(t))\cdot\lambda_i\cdot\frac{\gb(t)}{2\fb_u(t)-g'_{u,v}(t)} \\
        & = x_{i,u}\cdot\frac{\gb(t)}{2\fb_u(t)-g'_{u,v}(t)}\cdot(g'_{u,v}(t)+2(\fb_u(t)-g'_{u,v}(t))) \\
        & = x_{i,u}\gb(t).
    \end{align*}
    Similarly,
    \begin{align*}
        \ddt q_2(t)
        & = g'(t)\cdot(2\ln2)\cdot\frac12 + (\fb(t)-g'(t))\cdot(2\ln2)\cdot1 \\
        & = (\ln2)(g'(t)+2(\fb(t)-g'(t))) \\
        & = (\ln2)\gb(t).
    \end{align*}
    In addition, they all start with $q_{i,u}(0)=q_2(0)=0$. Therefore $\frac{q_{i,u}(t)}{x_{iu}}=\frac{q_2(t)}{\ln2}$.
\end{proof}

\begin{theorem}
    \cref{alg:general} is $\Gamma$-competitive against the optimal value of the Jaillet-Lu LP, where $\Gamma \approx 0.66217 > 0.662$.
\end{theorem}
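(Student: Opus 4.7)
The plan is essentially a routine composition of the lemmas already established. By Lemma \ref{lem:edge}, when running Algorithm \ref{alg:general} on any instance, the match probability of each edge at time $t=1$ is proportional to its LP fraction, with the proportionality constant depending only on the edge's class: for a first-class edge $(i,u)$, $q_{i,u}(1) = x_{iu}\cdot\frac{q_1(1)}{1-\ln 2}$, and for a second-class edge $(i,u)$, $q_{i,u}(1) = x_{iu}\cdot\frac{q_2(1)}{\ln 2}$.

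Next, I would apply Lemma \ref{lem:res}, which computes these two class-wise constants in the hard instance $G$. Specifically, $q_1(1) = \Gamma(1-\ln 2)$ and $q_2(1) = \Gamma\ln 2$ with $\Gamma \approx 0.66217$. Substituting into the expressions above, both classes yield the same uniform rate: $q_{i,u}(1) = \Gamma\, x_{iu}$ for every edge in the instance. This is precisely the goal of the ``generalization'' strategy announced at the start of Section 4, namely to match each edge at the same scaled rate as the corresponding class in $G$.

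Finally, the competitive ratio follows by summing weighted contributions and comparing with the LP value. The expected objective of Algorithm \ref{alg:general} is
\begin{equation*}
\sum_{(i,u)\in E} w_{iu}\, q_{i,u}(1) \;=\; \Gamma \sum_{(i,u)\in E} w_{iu}\, x_{iu},
\end{equation*}
which is exactly $\Gamma$ times the Jaillet--Lu LP objective. Since this holds on every instance (after the preprocessing reduction of \citet{yan2024edge} noted in Section \ref{sec:preliminaries}), the theorem follows.

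There is no real obstacle remaining: all the difficult work has been absorbed into Lemmas \ref{lem:fu-le}--\ref{lem:edge}, whose key task was showing that Algorithm \ref{alg:general} is well-defined (the probability $\gb(t)/\gb_{u,v}(t)$ is at most $1$) and that each offline vertex in the generalized algorithm is matched at exactly the same rate $f(t)$ as in the hard instance. The remaining argument is a one-line reduction from per-edge identities to an overall ratio, and the numerical value $\Gamma>0.662$ is inherited verbatim from Lemma \ref{lem:res}.
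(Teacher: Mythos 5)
Your proposal matches the paper's proof essentially line for line: invoke \cref{lem:edge} to reduce the per-edge match probability to a class-wise constant, invoke \cref{lem:res} to evaluate those constants to $\Gamma$, and then compare the weighted sum to the LP objective (with the preprocessing reduction of \citet{yan2024edge} noted). No gaps; this is the intended argument.
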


\begin{proof}
    \cref{lem:res} tells us that $\frac{q_1(1)}{1-\ln2}=\frac{q_2(1)}{\ln2}=\Gamma \approx 0.66217$. Combining it with \cref{lem:edge}, we know that every edge $(i,j)$ is matched with probability $\Gamma x_{ij}$. Then the expected objective of \cref{alg:general} is $\Gamma\sum_{(i,j)\in E}w_{ij}x_{ij}$. Comparing it with the optimal LP value $\sum_{(i,j)\in E}w_{ij}x_{ij}$ proves the theorem.
\end{proof}

Finally, we remark that, to implement \cref{alg:general}, we need to compute $\gb(t)$ and $\gb_{u,v}(t)$ efficiently. In \cref{app:hardness}, we have the closed-form expression of $g(t)$ which can be computed in $O(1)$ time. However, $\gb_{u,v}(t)$ is related to the instance itself, and it might take exponential time for the algorithm to compute it precisely. Nevertheless, we can use the standard Monte-Carlo method to estimate it with an $o(1)$ multiplicative error in polynomial time. It then gives us a polynomial-time $(\Gamma-o(1))$-competitive algorithm. See \cref{app:estimate} for more details.

\section*{Acknowledgments}

We thank Mikkel Thorup and the anonymous reviewers for helpful comments.

\bibliographystyle{alpha}
\bibliography{ref}

\appendix

\section{Appendix}

\subsection{Asymptotic Equivalence Between the Models}
\label{app:equivalence}

Fix an instance of edge-weighted online stochastic matching.
Let $\opt_O$ and $\opt_P$ denote the expected objectives of the optimal offline algorithm in the original model and the Poisson arrival model, respectively.
\cite{huang2021online} proved the following lemma. We note that, although \cite{huang2021online} only considered vertex-weighted instances, their proof (of this lemma) also works for edge-weighted instances.

\begin{lemma}[\cite{huang2021online} Theorem 1]
\label{lem:equiv-opt}
    $(1-o(1))\opt_O \le \opt_P \le \opt_O$.
\end{lemma}

However, for objectives of an online algorithm, their similar result only holds when the algorithm is $\beta$-approximately monotone\footnote{An algorithm is $\beta$-approximately monotone if the expected weight from matching the $k$-th online vertex is at most $\beta$ times that from matching the $l$-th online vertex for any $k>l$.} for $\beta=o(\sqrt\Lambda)$.
Although their argument also works for edge-weighted instances, none of the existing edge-weighted algorithms are finitely approximately monotone. So, we cannot directly apply their asymptotic equivalence to the algorithms in \cite{yan2024edge,qiu2023improved} and this paper.

Nevertheless, the following two lemmas show that approximate monotonicity is not a necessary requirement for the online algorithm. Combining them with \cref{lem:equiv-opt}, we get \cref{thm:equiv}.

\begin{lemma}
    For any online algorithm $A_O$ in the original model, there is an online algorithm $A_P$ (with the same time complexity) in the Poisson arrival model, such that for any instance, if $A_O$ has an expected objective of $\alg_O$, then $A_P$ has an expected objective of at least $(1-o(1))\alg_O$.
\end{lemma}

\begin{proof}
    Let $n\sim\mathrm{Poisson}(\Lambda)$ denote the number of online vertices arriving in the Poisson arrival model (which is unknown to $A_P$ in the beginning). Let $m=(1-\Lambda^{-\frac13})\Lambda$.
    
    Given $A_O$, we define $A_P$ as follows. We uniformly sample $m$ different integers $k_1<k_2<\cdots<k_m$ from $[1,\Lambda]$ in the beginning, then simulate $A_O$ and match the $i$-th online vertex in the same way as the $k_i$-th online vertex in $A_O$ for $i\le m$, and discard remaining online vertices (if there are).
    Specifically, when the $i$-th online vertex arrives ($i\le m$), we simulate $A_O$ as follows. (Let $k_0=0$.) For every $j\in(k_{i-1},k_i)$, we manually sample a dummy online vertex according to the i.i.d. distribution, and let $A_O$ treat it as the $j$-th online vertex. Then, let $A_O$ treat the real arriving online vertex as the $k_i$-the online vertex, and we match or discard it in the same way.

    It is straightforward to see that, for any given $n\ge m$, the expected objective of $A_P$ is $\frac{m}{\Lambda}\alg_O=(1-\Lambda^{-\frac13})\alg_O$. On the other hand, since $\E[n]=\textrm{Var}[n]=\Lambda$, we have $\Pr[n<m] \le \Lambda^{-\frac13}$.
    So the objective of $A_P$ is at least $(1-O(\Lambda^{-\frac13}))\alg_O$.
\end{proof}

\begin{lemma}
    For any online algorithm $A_P$ in the Poisson arrival model, there is an online algorithm $A_O$ (with the same time complexity) in the original model, such that for any instance, if $A_P$ has an expected objective of $\alg_P$, then $A_O$ has an expected objective of at least $(1-o(1))\alg_P-o(1)\opt_O$.
\end{lemma}

\begin{proof}
    Let $n\sim\mathrm{Poisson}(\Lambda)$ denote the number of online vertices arriving in the Poisson arrival model.
    Let $m=(1+\Lambda^{-\frac13})\Lambda$.
    
    Let $A_P'$ be the online algorithm in the Poisson arrival model that acts the same as $A_P$ except that it will discard all online vertices after the $m$-th one. When $n\le m$, $A_P'$ does the same as $A_P$; when $n>m$, the objective of $A_P'$ is at most $\left\lceil\frac{n-m}{\Lambda}\right\rceil\opt_O$ less than $\alg_P$.
    Since $\E[n]=\textrm{Var}[n]=\Lambda$, we have $\Pr[n>m] \le \Lambda^{-\frac13}$ and $\E[(n-m)^+] \le \Lambda^{\frac12}$, where $x^+$ denotes $\max\{x,0\}$.
    So the objective of $A_P'$ is
    \begin{align*}
        \alg_P' & \ge \alg_P - \E\left[\left\lceil\frac{(n-m)^+}{\Lambda}\right\rceil\opt_O\right] \\
        & \ge \alg_P - \left(\frac{\E[(n-m)^+]}{\Lambda} + \Pr[n>m]\right)\opt_O \\
        & \ge \alg_P - \left(\Lambda^{-\frac12}+\Lambda^{-\frac13}\right)\opt_O.
    \end{align*}

    Given $A_P$, we define $A_O$ as follows. We first sample a Poisson process with $n\sim\mathrm{Poisson}(\Lambda)$ online vertices arrived, without drawing their types.
    Then, we uniformly sample $l=\min\{n,\Lambda\}$ different integers $k_1<k_2<\cdots<k_l$ from $[1,\min\{n,m\}]$ in the beginning. Then, we simulate $A_P$ based on the pre-sampled Poisson process, match the $i$-th online vertex in the same way as the $k_i$-th online vertex in $A_P$ for $i\le l$, and discard remaining online vertices (if there are).

    For any given $n$, the expected objective of $A_O$ is $\frac{\min\{n,\Lambda\}}{\min\{n,m\}}$ times the expected objective of $A_P'$. So the overall expected objective of $A_O$ is at least $\frac{\Lambda}{m}\alg_P' = (1-O(\Lambda^{-\frac13}))\alg_P - O(\Lambda^{-\frac13})\opt_O$.
\end{proof}

\subsection{Missing Details in \cref{thm:hardness}}
\label{app:hardness}

The expected objective of \cref{alg:opt-G} can be computed as follows. Let $f(t)$ denote the probability that exactly one offline vertex has been matched at time $t$ and $g(t)$ denote the probability that both offline vertices have been matched at time $t$.
By our matching strategy in \cref{alg:opt-G}, during time $[0,t_0]$, each offline vertex will only be matched if its first-class neighbor arrives. So for $t\in[0,t_0]$, we have
\begin{align*}
    f(t) & = 2\p{1-e^{-(1-\ln2)t}}e^{-(1-\ln2)t}, \\
    g(t) & = \p{1-e^{-(1-\ln2)t}}^2.
\end{align*}

Now consider $t\in[t_0,t_1]$. When one offline vertex has been matched, the other one will only be matched by its first-class neighbor, so
$$\ddt g(t) = f(t)\cdot(1-\ln2).$$
When both offline vertices are unmatched, as long as any online vertex arrives, one of the offline vertices will be matched. Note that the probability that both offline vertices are unmatched at time $t$ is simply $(1-f(t)-g(t))$. So
$$\ddt f(t) = (1-f(t)-g(t))\cdot2 - \ddt g(t).$$
Combining these differential equations with the initial values $f(t_0)$ and $g(t_0)$, we get, for $t\in[t_0,t_1]$,
$$f(t) = 2^{t+1}e^{-t} - \frac{ 2^{2t_0+1}e^{-2t} + 2^{t+t_0+1}e^{-t-t_0}\ln2}{1+\ln2}$$
and
$$g(t)=1 - 2^{t+1}e^{-t} + \frac{2^{2t_0}e^{-2t}(1-\ln2)  + 2^{t+t_0+1}e^{-t-t_0}\ln2}{1+\ln2}.$$

Now consider $t\in[t_1,1]$. When one offline vertex has been matched, the other one will be matched by both neighbors, so
$$\ddt g(t) = f(t)\cdot(1+\ln2).$$
And we still have
$$\ddt f(t) = (1-f(t)-g(t))\cdot2 - \ddt g(t).$$
Combining these differential equations with the initial values $f(t_1)$ and $g(t_1)$, we get, for $t\in[t_1,1]$,

$$f(t) = 2^{2t_1+1-t}e^{-t} -\frac{2^{2t_0+1}e^{-2t}}{1-\ln2}
-\frac{2^{2t_1+t_0+1}e^{-t-t_0}\ln2}{2^t(1+\ln2)}
+\frac{2^{2t_0+t_1+2}e^{-t-t_1}\ln2}{2^t(1+\ln2)(1-\ln2)}$$
and
$$g(t) = 1 - 2^{2t_1+1-t}e^{-t}
+ \frac{2^{2t_0}e^{-2t}(1+\ln2)}{1-\ln2}
+ \frac{2^{2t_1+t_0+1}e^{-t-t_0}\ln2}{2^t(1+\ln2)}
- \frac{2^{2t_0+t_1+2}e^{-t-t_1}\ln2}{2^t(1+\ln2)(1-\ln2)}.$$

Then we know that the expected number of offline vertices we have matched is $f(1)+2g(1)$. To compute the expected objective, it remains to compute how many first-class edges we have matched. Let $p(t)$ denote the expected number of first-class edges we have matched at time $t$. We have
$$\ddt p(t) = (1-f(t)-g(t))\cdot2(1-\ln2)+f(t)\cdot(1-\ln2).$$
Solving it, we get
$$p(1) = 2 + 2^{2t_0}e^{-2}\ln2 - \frac{2^{2t_1}e^{-1}(1-\ln2)}{1+\ln2} +
\frac{P}{(1+\ln2)^2},$$
where
\begin{align*}
    P = & \ 2^{t_0+2t_1}e^{-t_0-1}(1-\ln2)\ln2 + 2^{2t_0+1}e^{-2t_1}(1-\ln2)\ln2 + 2^{t_0 + t_1 + 2}e^{-t_0 - t_1 }\ln^2 2 \\
    & - 2^{2t_0+t_1+1}e^{-t_1-1}\ln2 - 2^{t_1+2}e^{-t_1}(1+\ln2)\ln2 - 2^{2t_0}e^{-2t_0}(1-\ln2)^2\ln2.
\end{align*}

Then, the expected objective is $\alg(k,t_0,t_1)=f(1)+2g(1)+(k-1)p(1)$.

\subsection{Missing Details in \cref{lem:res}}
\label{app:alg-res}

Given \cref{app:hardness}, it's straightforward to compute all we need. In the end, each first-class edge is matched with probability $q_1=p(1)/2$. Each second-class edge is matched with probability $q_2=(f(1)+2g(2)-p(1))/2$. Forcing $t_1=t_0$ and balancing $q_1/(1-\ln2)=q_2/\ln2$, we get $q_1/(1-\ln2)=q_2/\ln2 \approx 0.66217$ when $t_0 \approx 0.14753$.

\subsection{Computational Aspects of \cref{alg:general}}
\label{app:estimate}

The matching strategy of \cref{alg:general} depends on which offline vertices are still unmatched. If we want to compute $\gb_{u,v}(t)$ precisely, one way is to define a function $f_S(t)$ for each set $S\subseteq J$, and solve the differential equations between them as we did in \cref{app:hardness}. However, there will be exponentially many equations, so we do not expect to solve them efficiently.

Here we discuss how to efficiently estimate $\gb_{u,v}(t)$ with an $o(1)$ multiplicative error.
First, notice that $\gb_{u,v}(t)=\Omega(1)$ for $t\ge t_0$, which means we only need to achieve an $o(1)$ additive error.
Since $\gb_{u,v}(t)$ is a continuous function with bounded derivative, we can approximate $t$ by its closest multiple of $1/poly(\Lambda)$. Therefore, we only need to consider polynomially many discrete time points, which means the total number of $\gb_{u,v}(t)$'s we need to estimate is also polynomial. Then we can use the standard Monte-Carlo method to estimate them in polynomial time, with $o(1)$ additive errors.

\end{document}